\newcommand{\bc}{\color{black}}
\newcommand{\bcc}{\color{black}}
\newlength{\dinwidth}
\newlength{\dinmargin}
\newif\ifshowdetails
\newcounter{todo}
\newcommand{\todobox}[1]{
  \textcolor{blue}{
    \fbox{\parbox{0.9\textwidth}{#1}}%
  }
}
\newcommand{\todonotetag}{TODO\thetodo}
\newcommand{\todonote}[1]{%
\stepcounter{todo}%
{\let\thefootnote\todonotetag
\footnote{\todobox{#1}}%
}
}
\definecolor{detailsgray}{gray}{0.3}
\newcommand{\detail}[1]{%
{\color{detailsgray}$\blacktriangleright${#1}$\blacktriangleleft$}%
}
\newcommand{\detailspar}[1]{
\par \noindent {\color{detailsgray} $\blacktriangleright$ \textit{#1} $\blacktriangleleft$ } \par
}
\newcommand{\detail}[1]{} 
\newcommand{\detailspar}[1]{} 
\newcommand{\thetae}{\eps}
\newcommand{\alt}{\tilde{\al}}
\newcommand{\sect}{ \mrm{In}(\mfa)}
\newcommand{\sym}{\mathbf{\sigma}}
\newcommand{\Ss}{S}
\newcommand{\Rr}{R}
\newcommand{\Aut}{\mrm{Aut}}
\newcommand{\vvp}{\pmb{\vp}}
\newcommand{\vJ}{\pmb{J} }
\newcommand{\vS}{\pmb{S} }
\newcommand{\vL}{\pmb{L}}
\newcommand{\pol}{\pmb{\epsilon}_{\la}}
\newcommand{\vv}{\pmb{v}}
\newcommand{\hGa}{\hat{\Ga}}
\newcommand{\y}{\!\!\!}
\newcommand{\mrm}{\mathrm}
\newcommand{\vP}{\mathbf{P}}
\newcommand{\vA}{\mathbf{A}}
\newcommand{\vf}{\mathbf{f}}
\newcommand{\vg}{\mathbf{g}}
\renewcommand{\mathbf}{\boldsymbol}
\newcommand{\Span}{\mathrm{Span}}
\newcommand{\mcF}{\mathcal F}
\newcommand{\mcL}{\mathcal L}
\newcommand{\mcS}{\mathcal S}
\newcommand{\vx}{\boldsymbol{x}}
\newcommand{\ti}{\tilde}
\newcommand{\vac}{{\bc \mrm{vac}}}
\newcommand{\Om}{\Omega}
\newcommand{\ga}{\gamma}
\newcommand{\ka}{\kappa}
\newcommand{\vk}{\boldsymbol{k}}
\newcommand{\be}{\beta}
\newcommand{\pa}{\partial}
\newcommand{\ov}{\overline}
\newcommand{\vp}{\varphi}
\newcommand{\mfh}{\mathfrak{h}}
\newcommand{\eps}{\varepsilon}
\newcommand{\de}{\delta}
\newcommand{\De}{\Delta}
\newcommand{\pho}{\mathrm{ph}}
\newcommand{\nin}{\noindent}
\newcommand{\si}{\sigma}
\newcommand{\ph}{\phantom}
\newcommand{\h}{\fr{1}{2}}
\newcommand{\nat}{\mathbb{N}}
\newcommand{\hil}{\mathcal{H}}
\newcommand{\om}{\omega}
\newcommand{\mfa}{\mathfrak{A}}
\newcommand{\fr}[2]{\frac{#1}{#2}}
\newcommand{\al}{\alpha}
\newcommand{\real}{\mathbb{R}}
\newcommand{\complex}{\mathbb{C}}
\newcommand{\la}{\lambda}
\newcommand{\non}{\nonumber}
\newcommand{\Ga}{\Gamma}
\newcommand{\lan}{\langle}
\newcommand{\ran}{\rangle}
\def\qed{$\Box$\medskip}
\newtheorem{theoreme}{Theorem } [section]
\newtheorem{proposition}[theoreme]{Proposition}
\newtheorem{lemma}[theoreme]{Lemma}
\newtheorem{definition}[theoreme]{Definition}
\newtheorem{corollary}[theoreme]{Corollary}
\newtheorem{remark}[theoreme]{Remark}
\newtheorem{example}[theoreme]{Example}
\newtheorem{criterion}[theoreme]{Criterion}
\newtheorem{conjecture}{Conjecture}
\newtheorem{assumption}{Assumption}
\newcommand{\tr}{\mrm{tr}}
\newcommand{\bea}{\begin{assumption}}
	\newcommand{\eea}{\end{assumption}}
\newcommand{\beco}{\begin{conjecture} }
	\newcommand{\eeco}{\end{conjecture} }
\newcommand{\beq}{\begin{equation}}
	\newcommand{\eeq}{\end{equation}}
\newcommand{\beqa}{\begin{eqnarray}}
	\newcommand{\eeqa}{\end{eqnarray}}
\newcommand{\ben}{\begin{arabicenumerate}}
	\newcommand{\een}{\end{arabicenumerate}}
\newcommand{\bex}{\begin{example}}
	\newcommand{\eex}{\end{example}}
\newcommand{\ber}{\begin{remark}}
	\newcommand{\eer}{\end{remark}}
\newcommand{\bec}{\begin{corollary}}
	\newcommand{\eec}{\end{corollary}}
\newcommand{\bep}{\begin{proposition}}
	\newcommand{\eep}{\end{proposition}}
\newcommand{\becr}{\begin{criterion}}
	\newcommand{\eecr}{\end{criterion}}
\def\bel{\begin{lemma}}
	\def\eel{\end{lemma}}
\def\bet{\begin{theoreme}}
	\def\eet{\end{theoreme}}
\def\bed{\begin{definition}}
	\def\eed{\end{definition}}
\begin{document}

\title{Relative normalizers of automorphism groups, infravacua and the problem of velocity superselection in  QED}

\author{Daniela Cadamuro\footnote{E-mail: daniela.cadamuro@itp.uni-leipzig.de} \ and Wojciech Dybalski\footnote{E-mail: dybalski@ma.tum.de}  \\\\
Zentrum Mathematik, Technische Universit\"at M\"unchen,\\
D-85747 Garching, Germany   }

\date{}

\maketitle

\begin{abstract} 
We advance superselection theory of pure states on a $C^*$-algebra $\mfa$  outside of
the conventional (DHR) setting. First, we canonically define conjugate and second conjugate
classes of such states with respect to a given  reference state $\om_\vac$ and background  $a \in \Aut(\mfa)$.
Next, for some subgroups
 $R  \varsubsetneq S  \varsubsetneq G\subset  \Aut(\mfa)$ we study the  family $\{\, \om_\vac\circ s\,|\,  s\in S \}$
 of \emph{infrared singular states} whose superselection sectors may be disjoint for different $s$. 
We show that their  conjugate and second conjugate classes always coincide provided that $R$ leaves the sector of $\om_\vac$
invariant and  $a$  belongs to the \emph{relative normalizer} $N_G(R,S):=\{\, g\in G\,|\, g\cdot S\cdot g^{-1}\subset R\,\}$.
We study the basic properties of this apparently new group theoretic concept and show that the Kraus-Polley-Reents infravacuum automorphisms
belong to {\bcc the} relative normalizers of  the automorphism group of a suitable CCR algebra. Following up on
this observation we show that the problem of velocity superselection in non-relativistic QED  disappears at
the level of conjugate and second conjugate classes, if they are computed with respect to an  infravacuum background.
We also demonstrate that for more regular backgrounds such merging effect does not occur.
\end{abstract}

%%%%%%%%%%%%%%%%%%%%%%%%%%%%%%%%%%%%%%
\section{Introduction}
%%%%%%%%%%%%%%%%%%%%%%%%%%%%%%%%%%%%%%
\setcounter{equation}{0}

The infrared problem is a maze of difficulties in spectral, scattering and superselection theory of quantum systems which can be traced back to the presence of massless particles
and long-range forces.  This topic enjoys currently some revival in the high-energy physics community triggered by 
 a proposal of  Hawking, Perry and Strominger to solve  the black hole information paradox using the  infrared degrees of freedom
of the gravitational field \cite{HPS16}.  This proposal  relies on an implicit assumption that these infrared degrees of freedom  can be  measured,
in particular have sufficiently mild fluctuations. In the case of QED this assumption {\bcc (cf. \cite[formula (2b)]{Bu86})} lies at the heart of the infrared problem. Its consequences include several 
closely related pathologies which can collectively be called the \emph{infraparticle problem}:  superselection of the electron's velocity 
\cite{Fr73, CF07, CFP09, KM14, DT11}, {\bc the} absence of the sharp mass  of the electron \cite{Bu86, HH08}, and infrared divergencies in the Dyson scattering matrix \cite{We}.   In view of the above, a natural approach to cure the infrared problems
is to immerse the system in a low-energetic but highly fluctuating radiation which blurs the infrared degrees of freedom. A concrete example of such an
\emph{infravacuum} was given by Kraus, Polley and Reents already four decades ago in the context of {\bc the quantized electromagnetic field coupled to an external current} \cite{Re74, KPR77, Kr82}.
The initial success  of the approach was a well-defined Dyson scattering matrix in this exactly solvable situation. However, the  problem of velocity superselection and sharp mass of the electron were not treated in these works, mostly due to the absence (back then)  of mathematically tractable, translation-invariant models of QED and limited understanding of superselection theory in the presence of long-range forces.   As the intervening decades witnessed  progress in mathematical control of such systems in the infrared regime
(see, e.g., \cite{BFS98.2,  BCFFS13,  Pi03,  Pi05, CF07, CFP07, CFP09, HH08, Hi00, KM14, LMS07, DP13.1, DP17.2}) and substantial advances on the side of general superselection theory \cite{Bu82, BR14} 
there is now every reason to  re-initiate the infravacuum program. 

In the present paper we point out that the infravacua exemplify a general group-theoretic concept which apparently escaped attention so far:  Let $R  \varsubsetneq S  \varsubsetneq G$
be subgroups of a group $G$ and let us call 
\beqa
N_{G}(R,S):=\{\, g\in G\,|\, g\cdot S\cdot g^{-1}\subset R\,\}
\eeqa
the \emph{relative normalizer} of the pair of
subgroups $(R,S)$.  Due to the tension between the inclusion $\Rr \varsubsetneq \Ss$ and the opposite inclusion $g\cdot \Ss \cdot g^{-1}\subset \Rr$ in the definition of $N_{G}(R,S)$, one can immediately exclude non-empty  relative normalizers   in many situations. 
%it is difficult to give examples of non-empty relative normalizers. 
In particular, relative normalizers are empty for any subgroups of an abelian group $G$.  
The same is true for  finite groups (abelian or not), since elements of the relative normalizer would provide bijections between sets
of different cardinality. Similarly, closed, connected subgroups of finite-dimensional Lie groups have  empty relative normalizers, as 
their elements would provide continuous bijections between sets of different dimension (cf. \cite[Proposition~5.1, {\bc Theorem 5.6}]{Sch16}).  
Finding non-trivial examples of relative normalizers is also not difficult, for example in the context of infinite discrete groups 
(see Subsection~\ref{relative-normalizer-subsection}). However, our main interest in this paper is in a certain concrete class of relative normalizers,
which is relevant for infrared problems:
%In this situation
%one may wonder if non-empty relative normalizers do occur at all.
%It turns out that they do: 
Let $\mcL$ be the photon single particle space with a suitable IR regularisation. We show that the inhomogeneous symplectic group $\mrm{ISp}(\mcL)$
admits  relative normalizers which contain the Kraus-Polley-Reents symplectic maps $T$ (cf.~Subsection~\ref{semi-direct}, Theorem~\ref{infravacuum-theorem} and Definition~\ref{infravacuum}).  These relative normalizers can be lifted to the group of automorphisms $\Aut(\mfa)$ of the corresponding CCR algebra~$\mfa$.
Indeed, with the help of the automorphism $\al$ which maps elements of $\mrm{ISp}(\mcL)$ into the corresponding 
Bogoliubov transformations in  $\Aut(\mfa)$  we  show that $\al_T$ belongs to the relative normalizer
in any subgroup $G$ of $\Aut(\mfa)$ containing $\al_{ \mrm{ISp}(\mcL) }$. In this context $R$ and $S$ are the subgroups of automorphisms $\al_{\mcL^*_\mrm{R}}$ and $\al_{\mcL^*_\mrm{S}}$ of $G$, where $\mcL^*_\mrm{R}$ is a subspace of the single photon space without IR regularisation and $\mcL^*_\mrm{S}$ is the sum of $\mcL^*_\mrm{R}$ with the set of infrared singular dressing transformations of the bare electron for different momenta.

%\textcolor{red}{More precisely,  $\mcL$ above is the photon single particle space with suitable IR regularisation. 
%Then $G=\mrm{ISp}(\mcL)$ 
%Then $G$ is any subgroup  of the automorphism group $\Aut(\mfa)$ of the corresponding CCR algebra $\mfa$ such that $\al_{ \mrm{ISp}(\mcL) }$ is contained in $G$, where $\al$ %maps an element of
%$ \mrm{ISp}(\mcL)$ into the corresponding Bogoliubov transformation. Further, $R$ and $S$ are the subgroups of automorphisms $\al_{\mcL^*_\mrm{R}}$, $\al_{\mcL^*_\mrm{S}}$ of %$G$ where $\mcL^*_\mrm{R}$ is the single photon space \emph{without} IR cutoff and $\mcL^*_\mrm{S}$ is the sum of $\mcL^*_\mrm{R}$ with the set of dressing transformations of %the bare electron for different momenta.
%In this case, the Bogoliubov transformation $\al_T$ associated with $T$
%belongs to a relative normalizer in the automorphism group $\Aut(\mfa)$.}

Our interest in relative normalizers derives from their relevance for superselection theory of the $C^*$-algebra $\mfa$. Let $P_{\mfa}$ be the set of pure states
and $X:=P_{\mfa}/ \mrm{In}(\mfa)$ the set of sectors.  {\bc The latter are} the orbits of states under the action of inner automorphisms of $\mfa$ 
(cf.  \cite[Definition 4.1]{BR14})  and for a state $\om\in P_{\mfa}$ such orbit will be denoted by $[\om]_{\sect}$.
Now we consider the natural  right action of a subgroup $G\subset \Aut(\mfa)$ on $X$, fix a reference (`vacuum') sector $x_0=[\om_{{\bc \mrm{vac}}}]_{\sect}$ and a `background
automorphism' $a\in G$ {\bc corresponding to background radiation}. With this data, for any sector $x\in X$ we define the \emph{conjugate class} in a way which is suggested by the 
DHR superselection theory of simple charges \cite{Ha}:
\beqa
\ov{[x]}^{a}:=\{\,x_0\cdot a\cdot  g^{-1}\,|\, g\in G^{a}_{x, x_0}\,\}, \textrm{ where }  G_{x, x_0}^{a}:=\{\, g\in G\,|\, x=x_0\cdot a\cdot g\,\}.
\eeqa
By iterating this procedure, we also define the \emph{second conjugate class} $\ov{\ov{[x]}}^{a}$. Now let  $R  \varsubsetneq S  \varsubsetneq G$ be subgroups as above 
and let us consider the family  $\{\, x_0\cdot s\,|\, s\in S\, \}$ of \emph{singular sectors}. Our main general result is that
\beqa
\ov{[x_0\cdot s_1]}^{a}=\ov{[x_0\cdot s_2]}^{a} \textrm{ and } \ov{\ov{[x_0\cdot s_1]}}^{a}=  \ov{\ov{[x_0\cdot s_2]}}^{a}  \label{sector-identities}
\eeqa
hold for all $s_1, s_2\in S$, provided that $x_0\cdot r=x_0$ for all $r\in R$   and $a\in  N_{G}(R,S)$. These equalities are non-trivial as long as $x_0\cdot s_1\neq
x_0\cdot s_2$ for some $s_1\neq s_2$  and we give such examples below.  We also demonstrate that the identities (\ref{sector-identities}) may fail, if the
background $a$ is not in  $N_{G}(R,S)$. We stress that the above considerations are purely group-theoretical, and we referred to $\Aut(\mfa)$ only 
for concreteness {\bc and motivation}.

If a sector $x\in X$ belongs to the $G$-orbit of $x_0$, it is easy to see that  $x \in \ov{\ov{[x]}}^{a}$. More than that, if $a\in N_{G}(R,S)$ then also
$x\cdot s \in  \ov{\ov{[x]}}^{a}$ for all $s\in S$ (cf. Theorem~\ref{disjointness-proposition-alg}~(c)).  
In the context of the physical example below,
the second conjugate class 
with respect to an  \emph{infravacuum  background}  absorbs a multitude of distinct sectors  $x\cdot s$, $s\in S$, which differ only by  physically irrelevant  
\emph{soft-photon {\bc clouds}}. 
Thus the  second conjugate class  appears to be a natural generalisation of the concept of a sector for theories with infrared problems. It is similar -- in intention --
to the \emph{charge classes} introduced in \cite{Bu82, BR14}, but it does not rely on  locality.  Also, in contrast to earlier discussions of superselection theory
with respect to infravacuum backgrounds \cite{Bu82, Ku, Ku98.1}{\bc ,} we do not use any variant of the DHR criterion. 
This facilitates  applications of our general results  to  non-relativistic, interacting models of QED,  as we now summarize.

The Hilbert space of the model {\bc we consider} is $\hil=L^2(\real^3)\otimes \mcF_{\pho}$, where $L^2(\real^3)$ carries the electron degrees of freedom and   $\mcF_{\pho}$ is the Fock space of the physical photon states.  The Hamiltonian has the standard form \cite{Sp}
\beqa
H:= \frac{1}{2}(-i\nabla_{\pmb{x}} + \ti{\alpha}^{1/2} \pmb{A}(\pmb{x}))^2 + H_{\pho},  \label{intro-Hamiltonian}
\eeqa
where $\ti{\al}>0$ is the coupling constant, $\pmb{x}$ is the position of the electron, $\pmb{A}$ is the electromagnetic potential in the Coulomb gauge with fixed ultraviolet regularization and 
$H_{\pho}$ is the free photon Hamiltonian. This Hamiltonian is translation-invariant, that is, it commutes with the total momentum operators $\mathbf{P}=-i\nabla_{\pmb{x}}+\mathbf{P}_{\pho}$, where $\mathbf{P}_{\pho}$ is the  free-photon momentum. Consequently, we can decompose it into the fiber Hamiltonians $H_{\pmb{P}}$ at fixed momentum
\begin{equation}
 H =\Pi^* \bigg(\int^{\oplus} H_{\pmb{P}} \; d^3 \pmb{P}\bigg)\Pi, \label{fiber-hamiltonians}
\end{equation} 
where $\Pi$ is a certain unitary identification. 
The Hamiltonians  $H_{\pmb{P}}$, precisely defined in Section~\ref{sec:model} below, are self-adjoint operators acting on  the fiber Fock space which we denote by  $\mcF$. It is one manifestation of the infraparticle problem that $H_{\pmb{P}}$ do not have ground states for $\vP\neq 0$, at least for small $\ti\al$  and for $\vP$ in some ball $\mcS$ around zero \cite{HH08, CFP09}.  By introducing an infrared cut-off 
$\sigma>0$ in the  interaction term in (\ref{intro-Hamiltonian}), we obtain the Hamiltonian $H_{\sigma}$ and the corresponding fiber Hamiltonians 
$H_{\pmb{P},\sigma}$, which do have the (normalized) ground states $\Psi_{\pmb{P},\sigma}$ in the same region of parameters $\alt, \vP$. {\bc Although} these vectors tend weakly to zero as $\sigma\to 0$ \cite{CFP09},
one obtains well-defined states on a certain CCR algebra $\mfa$
\footnote{ This CCR-algebra is  defined in Subsections~\ref{algebraic-preliminaries}, \ref{relative-norm-subsection} below and   
we give a proof of velocity superselection in our setting in Section~\ref{proof-of-velocity-superselection}. In earlier works \cite{Fr73, CF07, CFP09, KM14} 
 slightly different algebras are 
used.  For example, in \cite{CF07} $\mfa=\ov{\bigcup_{\si>0} B(\mcF_{\si})}^{\|\,\cdot\,\|}$, where $\mcF_{\si}\subset \mcF$ is the subspace of functions vanishing in a ball of radius $\si$ around zero (in any variable).  } 
\beqa
\om_{\pmb{P}}(A){\bc :=}\lim_{\sigma\to 0}\lan \Psi_{\pmb{P},\sigma},\pi_{{\bc \mrm{vac}}}(A)  \Psi_{\pmb{P}, \sigma}\ran, \quad A\in\mfa, \label{original-Psi-0}
\eeqa
where $\pi_{{\bc \mrm{vac}}}$  is the Fock {\bc vacuum} representation.
These states can be interpreted as plane-wave configurations of the electron moving with momentum $\vP$.
It is well known that in (\ref{intro-Hamiltonian}), and in similar models of non-relativistic QED,  the GNS representations $\pi_{\pmb{P}}$
of the states $\om_{\pmb{P}}$  are disjoint for different values of $\pmb{P}\in \mcS$  \cite{Fr73, CF07, CFP09, KM14}. This 
mathematical formulation of velocity superselection was first introduced by Fr\"ohlich in \cite{Fr73}.

Given the $C^*$-algebra $\mfa$ and the family of distinct sectors $[\om_{\vP}]_{\sect}\neq [\om_{\vP'}]_{\sect}$ for $\vP\neq \vP'$ we can compute their conjugate and second conjugate classes.
We choose the Fock vacuum $[\om_{\bc \mrm{vac}}]_{\sect}$ as a reference sector and a Kraus-Polley-Reents infravacuum automorphism $\al_{T}$ as a background. By exhibiting
a concrete relative normalizer including $\al_T$, we obtain from (\ref{sector-identities}) that
\beqa
\ov{[ [ \om_{\vP} ]_{\sect} ]}^{\al_T}=\ov{[ [ \om_{\vP'}  ]_{\sect}  ]}^{\al_T} \textrm{ and } \ov{\ov{[ [ \om_{\vP}]_{\sect} ]}}^{\al_T}=  \ov{\ov{[ [ \om_{\vP'}]_{\sect} ]}}^{\al_T},  \label{sector-identities-one}
\eeqa
for all $\vP, \vP'\in \mcS$. Thus  velocity superselection disappears at the level of conjugate and second conjugate
classes. Furthermore, if $\al_T$ in (\ref{sector-identities-one}) is replaced with some regular background (e.g., the identity automorphism), the velocity superselection
persists, at least for $\vP=0$, $\vP'\neq 0$. These results, stated precisely in Theorem~\ref{disjointness-proposition-nr-qed} below, illustrate the utility of our general theory in a concrete model.

Our paper is organized as follows: In Section~\ref{General-theory} we introduce the concepts of relative
normalizers and (second) conjugate classes, and prove their general properties. We also explain the 
relevance of these group-theoretic considerations to  superselection theory of $C^*$-algebras. Section~\ref{CCR-theory} concerns relative
normalizers in the inhomogeneous symplectic group and in the automorphism group of the corresponding CCR algebra.  These 
results are applied to the problem of velocity superselection in non-relativistic QED in  Section~\ref{QED-section}.
Section~\ref{Map-T} covers the definition and basic properties of {\bc the} Kraus-Polley-Reents infravacua and Section~\ref{proof-of-velocity-superselection}
gives a novel proof of velocity superselection.

\vspace{0.2cm}

\nin\textbf{Acknowledgements:} We would like to thank Henning Bostelmann, Detlev Buchholz, Fabio Ciolli, Maximilian Duell, Simon~Ruijsenaars and Yoh~Tanimoto for discussions
concerning the infravacuum representations. Thanks are also due to J\"urg Fr\"ohlich and Alessandro Pizzo for valuable remarks on non-relativistic
QED.  This work was  supported by the DFG within the Emmy Noether grant DY107/2-1.

%%%%%%%%%%%%%%%%%%%%%%%%%%%%%%%%%%%%
\section{  Relative normalizers and conjugate classes }  \label{General-theory} 
%%%%%%%%%%%%%%%%%%%%%%%%%%%%%%%%%%%%
\setcounter{equation}{0}

%%%%%%%%%%%%%%%%%%%%%%%%%%%%%
\subsection{Relative  normalizers}\label{relative-normalizer-subsection}
%%%%%%%%%%%%%%%%%%%%%%%%%%%%%
Let $H\subset G$ be a subgroup of a group $G$. Recall that the \emph{normalizer} of $H$ in $G$
is defined as
\beqa
N_{G}(H):=\{\, g\in G\,|\, g\cdot H\cdot g^{-1}= H\,\}
\eeqa
and it is the largest  subgroup of $G$ in which $H$ is normal. Also, we have the obvious relation:
\beqa
H\cdot N_{G}(H)\cdot H  =  N_{G}(H). \label{group-inclusions}
\eeqa
We generalize this concept as follows:
%%%%%%%%%%%%
\bed Let $\Rr\subset \Ss\subset G$ be two subgroups of $G$. Then the \emph{relative normalizer} of the pair $(\Rr, \Ss)$ in $G$ is
defined as
\beqa
N_{G}(\Rr,\Ss):=\{\, g\in G\,|\, g\cdot S\cdot g^{-1}\subset R\,\}. \label{relative-normalizer}
\eeqa
\eed
%%%%%%%%%%%%%%
Clearly, $N_{G}(\Rr,\Ss)$ is a semigroup, i.e., $N_{G}(\Rr, \Ss)\cdot N_{G}(\Rr, \Ss)\subset N_{G}(\Rr, \Ss)$, and  similarly to  
(\ref{group-inclusions}), we have
\beqa
\Rr\cdot N_{G}(\Rr,\Ss)\cdot \Ss   =    N_G(\Rr,\Ss).
\eeqa
If  $N_{G}(\Rr, \Ss)$ is a group then $\Rr=\Ss$ and $N_{G}(\Ss, \Ss)=N_{G}(S)$. 
On the other hand, $R=S$ only implies $N_{G}(\Ss, \Ss)\supset N_{G}(S)$. The inclusion
is proper and $N_{G}(\Ss, \Ss)$ fails to be a group  exactly if  $N_G(\ti R, S)$ is non-empty 
for some subgroup $\ti R  \varsubsetneq S$.
%does not imply that $N_{G}(\Rr, \Ss)$ is a group. In fact, if  $N_G(\ti R, S)$ is non-empty 
%for some subgroup $\ti R  \varsubsetneq S$, then $N_G(S) \varsubsetneq   N_G(S,S)$ and $N_G(S,S)$
%is not a group. 
Furthermore,  for $\Ss=G$ the condition $N_{G}(\Rr, \Ss)\neq \emptyset$  implies $\Rr=G$ and $N_{G}(\Rr, \Ss)=G$. 
Leaving aside the case $N_{G}(\Ss, \Ss) \varsupsetneq N_{G}(S)$, we will say that a relative normalizer  $N_{G}(\Rr, \Ss)$  is non-trivial
if  $\Rr  \varsubsetneq \Ss \varsubsetneq G$.
%We have  $N_G(S) \varsubsetneq   N_G(S,S)$ 
%iff $N_G(\ti R, S)$ is non-empty for some subgroup $\ti R  \varsubsetneq S$.

We note for future reference that relative normalizers behave naturally under group homomorphisms.
%%%%%%%%%%%%%%%%%%%%%%%%%%%%%%%%%%%%%%%%%%%
\bel\label{homomorphism-lemma} Let  $\vp: G \to H$ be a  group homomorphism and $R\subset S\subset G$ be subgroups of $G$.   
If $g\in N_{G}(R,S)$ then $\vp(g)\in N_{\ti{H}}(\vp(R), \vp(S) )$, where $\ti H\subset H$ is any subgroup containing $\vp(G)$.
\eel
%%%%%%%%%%%%%%%%%%%%%%%%%%%%%%%%%%%%%%%%%%%
\proof By assumption, $g\cdot s\cdot g^{-1}\in R$ for any $s\in S$. Hence $\vp(g)\cdot \vp(s)\cdot \vp(g)^{-1}=\vp(   g\cdot s\cdot g^{-1}    )\subset \vp(R)$. \qed
%%%%%%%%%%%%%%%%%%%%%%%%%%%%%%%%%%%%%%%%%%%%

The question of existence of non-trivial relative normalizers can readily be  settled by the following  example\footnote{We thank the anonymous referee for pointing out this example.}:
Let $\mathfrak{S}_{\mathbb{Z}}$ be the permutation group of integers. Let $\mathfrak{S}_{\nat}$, resp. $\mathfrak{S}_{\nat_0}$ be the subgroups
of $\mathfrak{S}_{\mathbb{Z}}$ consisting of permutations which act trivially on $\mathbb{Z}\backslash \nat$, resp. $\mathbb{Z}\backslash \nat_0$.  
%For any subset $U\subset \mathbb{Z}$ we define by $S_{U}$
%the subgroup of $S_{\mathbb{Z}}$ consisting of all the permutations which 
%act on $\mathbb{Z} \backslash U$ as the identity. 
Consider the inclusions
$ \mathfrak{S}_{\nat}\subset \mathfrak{S}_{\nat_0}\subset \mathfrak{S}_{\mathbb{Z}}.$
Let $a\in \mathfrak{S}_{\mathbb{Z}}$ be a permutation which shifts each element by one
to the right, i.e., $a(j)=j+1$, $j\in \mathbb{Z}$.
%$a(0)=1$, $a(-1)=0$ etc. 
%%Then for any $s\in \mathfrak{S}_{\nat_0}$
%%we have $r:=a\cdot s \cdot a^{-1}\in \mathfrak{S}_{\nat}$. In fact, for $j=0,1,2\ldots$ 
%\beqa
%a\cdot s \cdot a^{-1}(-j)= a\cdot s(-j-1)= a(-j-1)=-j.
%& &a\cdot s \cdot a^{-1}(0)=a\cdot s(-1)=a (-1)=(0)\\
%& &a\cdot s \cdot a^{-1}(-1)=a\cdot s(-2)=a(-2)=-j
%\eeqa
Then it is easy to see that $a\in N_{\mathfrak{S}_{\mathbb{Z}}}(\mathfrak{S}_{\nat},  \mathfrak{S}_{\nat_0})$.  One can find similar
examples in the context of diffeomorphism groups of intervals in $\real$  and unitary groups of Hilbert spaces.

%%%%%%%%%%%%%%%%%%%%%
\subsection{Group actions, orbits and conjugate classes}
%%%%%%%%%%%%%%%%%%%%%

Consider a group action of $G$ on a set $X$ which we denote  $ X\times G\ni (x,g)\mapsto  x\cdot g$.
(For future applications it is convenient to use the right action notation).
For any $x\in X$ we write
\beqa
G_{x}:=\{\, g\in G \,|\, x\cdot g=x\,\}
\eeqa
for the stabilizer subgroup of $x$. Furthermore, for any subgroup $H\subset G$ and $x\in X$ we denote the resulting orbit by
\beqa
[x]_{H}:=\{\,  x\cdot h\,|\, h\in H\,\}. 
\eeqa

Motivated by the DHR superselection theory,  we define for any $x\in X$ its conjugate and second conjugate class with respect to a certain reference point in $X$.
This is a counterpart of the inverse operation in $G$.
%%%%%%%%%%%%%%%%%%%%%%%%%%%%%%%%%%%%
\bed\label{conjugation-def} Fix  reference elements $x_0\in X$ and $a\in G$.  For any $x\in X$ define the set 
$G_{x, x_0}^{a}:=\{\, g\in G\,|\, x=x_0\cdot a\cdot g\,\}$ and write
\beqa
\ov{[x]}^{a}:=\{\,x_0\cdot a\cdot  g^{-1}\,|\, g\in G^{a}_{x, x_0}\,\}, \quad \ov{\ov{[x]}}^{a}:=\{\, x_0\cdot a\cdot  (g')^{-1} \,|\,  g' \in G^{a}_{y, x_0}, \, y\in \ov{[x]}^{a} \, \}. \label{first-second-conjugate}
\eeqa
We call $\ov{[x]}^{a}$, (resp. $\ov{\ov{[x]}}^{a}$) the conjugate (resp. second conjugate) class of $x$ with respect to $(x_0,a)$.
\eed
%%%%%%%%%%%%%%%%%%%%%%%%%%%%%%%%%
\nin We note that $G_{x,x_0}^{a}$ is non-empty only if $x\in [x_0]_{G}$. Using this we obtain a simpler characterisation of conjugate {\bc classes} in terms of orbits.
In particular, it is clear from the following lemma that the conjugate classes do not change if $a$ is replaced with $g_0 \cdot a$ for some $g_0\in G_{x_0}$. 
%%%%%%%%%%%%%%%%%%%%%%%%%%%%%%%
\bel\label{conjugation-lemma} Let $x_0$, $a$  be as in Definition~\ref{conjugation-def} and 
  suppose that $x=x_0\cdot g_x$ for some $g_x\in G$. Then
  \beqa
\ov{[x]}^{a}= [x_0\cdot a\cdot g_x^{-1}\cdot a]_{ a^{-1}\cdot G_{x_0} \cdot a}, \quad 
   \ov{\ov{[x]}}^{a}=  [x_0 \cdot g_x]_{ a^{-1}\cdot G_{x_0}\cdot a },  
\label{conjugate-lemma}
\eeqa
where $G_{x_0}$ is the stabiliser group of $x_0$ and $ a^{-1}\cdot G_{x_0}\cdot a=G_{x_0\cdot a}$ is the stabilizer group of $x_0\cdot a$. 
\eel
%%%%%%%%%%%%%%%%%%%%%%%%%%%%%
\proof Since $G^a_{x, x_0}:=\{\, g\in G\,|\,x_0\cdot g_x= x_0\cdot a\cdot g\,\}$, 
we have  $G^a_{x, x_0}=a^{-1}\cdot G_{x_0}\cdot g_x$. Thus definition (\ref{first-second-conjugate}) gives 
\beqa
\ov{[x]}^{a}=x_0\cdot a\cdot g_x^{-1} \cdot G_{x_0} \cdot a \label{first-class-intermediate}
\eeqa
which is the  first formula in (\ref{conjugate-lemma}). To show the second formula in (\ref{conjugate-lemma}) we iterate
this argument, that is, we replace $g_x$ on the r.h.s. of (\ref{first-class-intermediate}) with $a\cdot g_x^{-1} \cdot G_{x_0} \cdot a$.
This gives
\beqa
\ov{\ov{[x]}}^{a}=x_0\cdot a\cdot (a\cdot g_x^{-1} \cdot G_{x_0} \cdot a)^{-1} \cdot G_{x_0} \cdot a=x_0\cdot g_x \cdot a^{-1}  \cdot G_{x_0} \cdot a,
\eeqa
which concludes the proof. \qed 
%for any $g\in G^a_{x,x_0}$ we obtain $g_x\cdot g^{-1}\cdot a^{-1}=:g_0\in G_{x_0}$. Also conversely, 
%every element  $g_0\in G_{x_0}$ corresponds to some $g\in  G^a_{x, x_0}$ by the same relation.
%Thus $g^{-1}=g_x^{-1}\cdot g_0\cdot a$ and
%(\ref{first-second-conjugate}) gives the first formula in (\ref{conjugate-lemma}). 
%To show the second formula in (\ref{conjugate-lemma}), we note that $g'\in G^a_{ x_0\cdot a\cdot g_x^{-1}\cdot g_0\cdot a, \, x_0}$ implies that 
%$(g')^{-1}=a^{-1}\cdot g_0^{-1}\cdot g_x\cdot a^{-1}\cdot  g_0'\cdot a$ for
%some $g_0'\in G_{x_0}$. Also conversely, every element $g_0'\in G_{x_0}$ gives rise to some $g'\in G^a_{ x_0\cdot a\cdot g_x^{-1}\cdot g_0\cdot a, \, x_0}$ by this %relation. Using that $x_0\cdot g_0^{-1}=x_0$, we conclude the proof. \qed
%%%%%%%%%%%%%%%%%%%%%%%%%%%%%%
\begin{remark} It is clear from the proof of Lemma~\ref{conjugation-lemma} that any odd (resp. even)
number of conjugations, defined by iterating (\ref{first-second-conjugate}),  will reproduce the first (resp. the second) set in (\ref{conjugate-lemma}). Thus there
is no need to go beyond the second conjugate {\bc class}.
\end{remark}
%%%%%%%%%%%%%%%%%%%%%%%%%%%%%%

In the following proposition we find $(x_0, a)$ from Definition~\ref{conjugation-def} for
which distinct points from  $[x_0]_G$ give rise to distinct conjugate and second conjugate classes.
%%%%%%%%%%%%%%%%%%%%%%%%%%%%%
\bep\label{disjointness-proposition} Let $(x_0,a)$ be as in  Definition~\ref{conjugation-def} 
 and suppose that $a^{-1}\cdot G_{x_0} \cdot a\subset  G_{x_0}$. Then, for all $g\in G$,
 the following equivalence relations hold:
\beqa
 x_0 = x_0\cdot g\quad  \Leftrightarrow\quad \ov{[x_0]}^{a} = \ov{[x_0\cdot g]}^{a}\quad \Leftrightarrow \quad \ov{ \ov{[x_0]} }^{a} = \ov{\ov{[x_0\cdot g]}}^{a}. 
\label{disjointness-relations}
\eeqa
\eep
%%%%%%%%%%%%%%%%%%%%%%%%%%%%%
\proof  We  use Lemma~\ref{conjugation-lemma} to prove the   relations in (\ref{disjointness-relations}).  
It is clear that $x_0 = x_0\cdot g$ implies the other two equalities. As for the opposite implications, 
let us first   suppose  that  
$\ov{[x_0]}^a = \ov{[x_0\cdot g]}^a$, i.e., $[x_0\cdot a^2]_{ (a^{-1}\cdot G_{x_0}\cdot a)   }=[x_0\cdot a\cdot g^{-1}\cdot a ]_{(a^{-1}\cdot G_{x_0}\cdot a)}$, 
for some $g\in  G$. In other words,
\beqa
x_0\cdot a^2=x_0\cdot a\cdot g^{-1}\cdot g_0\cdot a \quad \textrm{ for some }\quad g_0\in G_{x_0}.
\eeqa
Hence, $a\cdot g^{-1}\cdot g_0 \cdot a^{-1}=g_0'$ for some $g_0'\in G_{x_0}$  and therefore $g=g_0 \cdot a^{-1}\cdot (g_0')^{-1}\cdot a$. 
Since   $a^{-1}\cdot G_{x_0} \cdot a\subset G_{x_0}$, we obtain that $g\in G_{x_0}$.
%This gives
%\beqa
%x_0\cdot g=x_0\cdot a^{-1} \cdot (g_0')^{-1} \cdot a=x_0  \cdot g_0'' =x_0,
%\eeqa
%where in the second step we used that $a^{-1}\cdot G_{x_0} \cdot a\subset G_{x_0}$. 

Let us now suppose  that  $\ov{ \ov{[x_0]} }^{a} = \ov{\ov{[x_0\cdot g]}}^{a}$,
that is, $[x_0]_{ a^{-1}\cdot G_{x_0}\cdot a }=[x_0 \cdot g]_{ a^{-1}\cdot G_{x_0}\cdot a }$ for some $g\in  G$.   This means
\beqa
x_0=x_0\cdot g\cdot a^{-1}\cdot g_0\cdot a   \quad \textrm{ for some }\quad g_0\in G_{x_0}.
\eeqa
Thus   $g\cdot a^{-1}\cdot g_0\cdot a\in G_{x_0}$, i.e., $g=g_0'\cdot a^{-1}\cdot g_0^{-1}\cdot a$ for some $g_0'\in G_{x_0}$. 
Since   $a^{-1}\cdot G_{x_0} \cdot a\subset G_{x_0}$, we obtain again that $g\in G_{x_0}$. \qed
%Consequently,
%\beqa
%x_0\cdot g=x_0 \cdot a^{-1}\cdot g_0^{-1}\cdot a=x_0\cdot g_0''=x_0,
%\eeqa
%where we used $a^{-1}\cdot G_{x_0} \cdot a\subset G_{x_0}$.  \qed
 %%%%%%%%%%%%%%%%%%%%%%%%%%%%%%%%%%%%

In the next theorem, which can be considered our main abstract result,  we identify $(x_0,a)$ from Definition~\ref{conjugation-def}
for which distinct points from  $[x_0]_G$ may give rise  to coinciding conjugate and second conjugate classes.
%%%%%%%%%%%%%%%%%%%%%%%%%%%%%%%%%%%%%%%%
\bet\label{theorem}\label{main-abstract-result} Let $  S\subset G$ be a subgroup and set $R_0:=G_{x_0}\cap S$.  
Then, for all  $a\in N_{G}(R_0,S)$ and    $s\in S$,
\beqa
\, \ov{[x_0 ]}^{a} = \ov{[x_0\cdot s ]}^a\quad \textrm{and} 
\quad \ov{ \ov{[x_0 ]} }^{a} = \ov{\ov{[x_0\cdot s  ]} }^{a}. \label{first-equality}
\eeqa
Furthermore, $[x_0\cdot g]_S\subset \ov{ \ov{[ x_0\cdot g]}}^{a}$ for all $g\in G$. 
\eet
%%%%%%%%%%%%%%%%%%%%%%%%%%%%%%%%%%%%%%%
\begin{remark}\label{non-triviality-remark} Clearly, the statement of Theorem~\ref{main-abstract-result} remains true if $R_0$ is replaced
by any subgroup $R\subset R_0$ and the proof below pertains to this case. Moreover, as we will see in Subsection~\ref{curing-velocity-superselection}, the assumptions of Theorem~\ref{main-abstract-result} are compatible with $x_0\neq x_0\cdot s$ for some $s\in S$. That is, under these assumptions the first equivalence relation in (\ref{disjointness-relations}) is not true.
\end{remark}
%%%%%%%%%%%%%%%%%%%%%%%%%%%%%%%%%%%%%%%%%
\proof Concerning the first equality, we write  using Lemma~\ref{conjugation-lemma}
\beqa
 \ov{[x_0\cdot s]}^a=[x_0\cdot a \cdot s^{-1}\cdot a]_{a^{-1}\cdot G_{x_0}\cdot a}=[x_0\cdot r\cdot a^2]_{a^{-1}\cdot G_{x_0}\cdot a}=\ov{[x_0]}^a,
 \eeqa
where in the second step we noted that $r:=a\cdot s^{-1} a^{-1}\in R$ since $a\in N_{G}(R,S)$ and in the third step we used that $R\subset G_{x_0}$.

As for the second equality, we proceed similarly. Lemma~\ref{conjugation-lemma} gives
\beqa
\ov{\ov{[x_0\cdot s ] }}^a=[x_0\cdot s  ]_{a^{-1}\cdot G_{x_0}\cdot a}=[x_0 \cdot  a^{-1} \cdot r^{-1} \cdot a ]_{a^{-1}\cdot G_{x_0}\cdot a } =
\ov{ \ov{[x_0 ]} }^a,
\eeqa
where we made use of the fact that $r^{-1}:=a\cdot s \cdot a^{-1}\in R\subset G_{x_0}$ and  thus $r^{-1}\cdot G_{x_0}=G_{x_0}$. 

To prove the last statement, we write for any $s'\in S$
\beqa
x_0\cdot g\cdot s'=x_0\cdot g\cdot  a^{-1}\cdot (a\cdot s'\cdot a^{-1})\cdot a=x_0\cdot g\cdot a^{-1}\cdot r'\cdot a\subset [x_0\cdot g]_{a^{-1}\cdot G_{x_0}\cdot a },
\eeqa
where we used that $r'=a\cdot s' \cdot a^{-1}\in R\subset G_{x_0}$. By applying Lemma~\ref{conjugation-lemma} we conclude the proof. \qed 
%%%%%%%%%%%%%%%%%%%%%%%%%%%%%%%%%%%%%%%%%%%%%%%%%%%%%%%%%%%%%%%%%

\subsection{Application in representation theory of  $C^*$-algebras}\label{superselection-theory}

%%%%%%%%%%%%%%%%%%%%%%%%%%%%%%%%%%%%%%%%%%%%%%%%%%%%%%%%%%%%%%%%%
Let  $\mfa$ be a $C^*$-algebra and let $R\subset S\subset G$ be subgroups of the group $\Aut(\mfa)$ of automorphisms
of $\mfa$.  Furthermore, we denote by $\mrm{In}(\mfa)\subset \Aut(\mfa)$ the normal subgroup of inner automorphisms. We denote
by $P_{\mfa}\subset \mfa^*$ the set of pure states on $\mfa$ on which $\Aut(\mfa) $ acts in a natural manner. For the  resulting action of $G\subset \Aut(\mfa)$
we write
\beqa
 P_{\mfa} \times  G\ni (\om, \ga)\to \om\circ \ga\in P_{\mfa}. \label{action-on-states}
\eeqa 

In the spirit of \cite[Definition 4.1]{BR14}, we define the  set of \emph{sectors} as $X_{\mfa}:=P_{\mfa}   / \mrm{In}(\mfa)$.  We recall in Proposition~\ref{sector-proposition},
that for any  $\om_1, \om_2\in  P_{\mfa}$, the equality of sectors $[\om_1]_{\mrm{In}(\mfa)}= [\om_2]_{\mrm{In}(\mfa)}$
holds iff the GNS representations of $\om_1$, $\om_2$ are unitarily equivalent.   Furthermore, for any $\om \in P_{\mfa}$  the stabilizer group $G_{[\om]_{\mrm{In}(\mfa) }}$
is precisely the group of automorphisms from $G$ which are unitarily implementable in the GNS representation of $\om$.

Since $\mrm{In}(\mfa)\subset \Aut(\mfa)$ is a normal subgroup, (\ref{action-on-states})
gives rise to an action of $G$ on the space of sectors:
\beqa
X_{\mfa}\times G\ni ([\om]_{\mrm{In}(\mfa)}, \ga)\to [\om\circ \ga]_{\mrm{In}(\mfa)} \in X_{\mfa}.
\eeqa
Let us now fix a reference (`vacuum') state $\om_{\mrm{vac}}\in P_{\mfa}$ and a reference (`background') automorphism $\al\in G$.  
 The pair $([\om_{\vac}]_{\mrm{In}(\mfa) }, \al )\in P_{\mfa}\times G$ will play the role of  $(x_0, a)$ from Definition~\ref{conjugation-def}.  
Now Proposition~\ref{disjointness-proposition} and Theorem~\ref{main-abstract-result} give the following:
%%%%%%%%%%%%%%%%%%%%%%%%%%%%%
\bet\label{disjointness-proposition-alg} Let $([\om_{\vac}]_{\mrm{In}(\mfa) }, \al )$ and  $R\subset S\subset G\subset \Aut(\mfa)$ be as above
and $R\subset G_{[\om_\vac]_{\mrm{In}(\mfa) }}$.
\begin{enumerate}
\item[(a)] Suppose that $\al^{-1}\circ G_{[\om_\vac]_{\mrm{In}(\mfa) }}\circ \al\subset  G_{[\om_\vac]_{\mrm{In}(\mfa) }}$. Then, for all $\ga\in G$,  the condition  
$[\om_\vac]_{\mrm{In}(\mfa)} =[\om_\vac\circ \ga]_{\mrm{In}(\mfa)}$  is equivalent to any of the following two equalities:
\beqa 
\ov{[[\om_\vac]_{\mrm{In}(\mfa) }  ] }^{\al} = \ov{[  [\om_\vac\circ \ga]_{\mrm{In}(\mfa)}  ]   }^{\al},  \quad  \ov{ \ov{[[\om_\vac]_{\mrm{In}(\mfa)} ] }}^{\al} =\ov{ \ov{[  [\om_\vac\circ \ga]_{\mrm{In}(\mfa)}  ]   }}^{\al}.
\eeqa

\item[(b)] Suppose that $\al\in N_{G}(R,S)$.  Then for all $\ga\in S$ 
 \beqa 
\ov{[[\om_\vac]_{\mrm{In}(\mfa) }  ] }^{\al} = \ov{[  [\om_\vac\circ \ga]_{\mrm{In}(\mfa)}  ]   }^{\al} \textrm{ and }  \ov{ \ov{[[\om_\vac]_{\mrm{In}(\mfa)} ] }}^{\al} =
\ov{ \ov{[  [\om_\vac\circ \ga]_{\mrm{In}(\mfa)}  ]   }}^{\al} \textrm{ holds. }
\eeqa

\item [(c)] Suppose that $\al\in N_{G}(R,S)$. Then, for all $\ga\in G$,  $[\om_\vac\circ \ga]_{\mrm{In}(\mfa) \cdot S } \subset \ov{ \ov{[[\om_\vac\circ \ga]_{\mrm{In}(\mfa)} ] }}^{\al}$.
\end{enumerate}
\eet
%%%%%%%%%%%%%%%%%%%%%%%%%%%%%%
\nin As indicated in Remark~\ref{non-triviality-remark}, part (b) of this theorem is only non-trivial if $S\backslash G_{[\om_\vac]_{\mrm{In}(\mfa)} } \neq \emptyset$. 
This latter relation fails, in particular, under the combined assumptions of (a) and (b), which give $S\subset \al^{-1}\circ R\circ \al\subset  \al^{-1}\circ G_{ [\om_\vac]_{\mrm{In}(\mfa)}  }\circ \al\subset G_{  [\om_\vac]_{\mrm{In}(\mfa)}  }$.
We will  give  examples illustrating the non-trivial case in  Subsection~\ref{curing-velocity-superselection}.
%%%%%%%%%%%%%%%%%%%%%%%%%%%
\section{Relative normalizers in the theory of canonical commutation relations} \label{CCR-theory}
%%%%%%%%%%%%%%%%%%%%%%%%%%
\setcounter{equation}{0}
%%%%%%%%%%%%%%%%%%%%%%%%%%%%%%%%
\subsection{Relative normalizers in the inhomogeneous symplectic group} \label{semi-direct}
%%%%%%%%%%%%%%%%%%%%%%%%%%%%%%%%

Given an infinite dimensional real vector space $\mcL$ we denote its algebraic dual by  $\mcL^*$ and the action of an element $\vv\in \mcL^*$ on $\vf\in \mcL$
by $(\vv, \vf)$. For any {\bc real} linear map $T: \mcL\to \mcL$ its transposition $T^t: \mcL^*\to \mcL^*$ is defined by  $(T^t\vv, \vf)=(\vv, T\vf)$. 
{\bc The group of invertible liner maps on $\mcL$ is denoted $\mrm{GL}(\mcL)$}.
We equip the vector space $\mcL$ with a non-degenerate symplectic form $\sym(\,\cdot, \cdot \,)$ and say that $T\in \mrm{GL}(\mcL)$ is symplectic if 
$ \sym( \,T\vf_1 , \, T\vf_2 ) = \sym( \vf_1 , \, \vf_2)$  for all $\vf_1, \vf_2\in \mcL$. The group of symplectic maps on $\mcL$ 
is denoted $\mrm{Sp}(\mcL)$. We define the inhomogeneous symplectic group in a way which is suitable for our purposes{\bc, namely}
\beqa
\mrm{ISp}(\mcL):= \mcL^* \rtimes_{\vp} \mrm{Sp}(\mcL)
\eeqa
with  the group homomorphism $\vp: \mrm{Sp}(\mcL)\to \mrm{GL}(\mcL^*)$ given by $\vp(T)=(T^{-1})^t$. The elements of $\mrm{ISp}(\mcL)$ are pairs
 $g=(\vv, T)\in  \mcL^* \times  \mrm{GL}(\mcL)$ and the product is defined by   $(\vv_1, T_1)\cdot (\vv_2, T_2)=(\vv_1+ (T_1^{-1})^t\vv_2, T_1T_2)$.
We write $T:=(0,T)$, $\vv:=(\vv, I)$ and  treat $\mcL^*$ and $\mrm{Sp}(\mcL)$ as subgroups of $\mrm{ISp}(\mcL)$. In this spirit we write
\beqa
\mcL^*_\mrm{R}\subset \mcL^*_\mrm{S}\subset    \mrm{ISp}(\mcL), \label{R-S-subspaces}
\eeqa
for subspaces $ \mcL^*_\mrm{R}, \mcL^*_\mrm{S}\subset \mcL^*$ which we treat as abelian subgroups of $\mrm{ISp}(\mcL)$.  
Since the group relations give $T\cdot \vv\cdot T^{-1}=(T^{-1})^t\vv$, for all $T\in  \mrm{Sp}(\mcL)$, $\vv\in \mcL^*$, we immediately obtain:
%%%%%%%%%%%%%%%%%%%%%%%%%%%%%
\bel\label{symplectic-lemma}  $T\in N_{ \mrm{ISp}(\mcL) }(\mcL^*_\mrm{R} , \mcL^*_\mrm{S})$ iff $  (T^{-1})^t\, \mcL^*_\mrm{S} \subset   \mcL^*_\mrm{R}$.
\eel
%%%%%%%%%%%%%%%%%%%%%%%%%%%%%
\nin For the above considerations $\mrm{Sp}(\mcL)$ could be replaced with any other subgroup of $\mrm{GL}(\mcL)$, but
the symplectic structure will be important in the next subsection.

%%%%%%%%%%%%%%%%%%%%%%%%%%%%
\subsection{Relative normalizers in automorphism groups of CCR algebras}\label{algebraic-preliminaries}
%%%%%%%%%%%%%%%%%%%%%%%%%%%

Let us first summarize some relevant information from the theory of  CCR algebras and their Bogoliubov automorphisms referring to \cite{DG,Ru78, Ro70} for more complete treatment.
The  $C^*$-algebra $\mfa$ of canonical commutation relations, associated with $\mcL$, is constructed in a standard manner: The $*$-algebra generated by symbols $\{W(\vf)\}_{\vf\in \mcL }$ satisfying the Weyl relations
\beqa
W(\vf_1)W(\vf_2)=e^{-i\sym( \vf_1, \vf_2)} W(\vf_1+\vf_2), \quad W(\vf)^*=W(-\vf) \label{Weyl-relations}
\eeqa
is completed in the $C^*$-norm  $\|\,\cdot\,\|=\sup_{(\pi, \hil)}\| \pi(\,\cdot\,)\|_{B(\hil)}$, where the supremum is taken over all representations 
$(\pi, \hil)$.

Proceeding to  {\bc relevant} automorphisms of $\mfa$, we consider a group homomorphism $\al:~\mrm{ISp}(\mcL)\to \Aut(\mfa)$ defined on the Weyl operators by
\beqa
\al_{(\vv, T)}(W(\vf))=e^{-2i(\vv, T\vf)}W(T\vf){\bc.} \label{automorphism-definition}
\eeqa
{\bc It is} extended to $\mfa$ using  the uniqueness theorem of Slawny, see \cite[Theorem~2.1 and page~13]{Petz}, and the boundedness of automorphisms
 of $C^*$-algebras \cite[Proposition~2.3.1]{BR}.

Lemmas~\ref{homomorphism-lemma} and \ref{symplectic-lemma}  give the following criterion for the existence of non-empty relative normalizers for subgroups of $ \Aut(\mfa)$.
%%%%%%%%%%%%%%%%%%%%%%%%%%%%%%%%%%%%%%%%%%%%%%%%%%%%%%%% 
\bel\label{Aut-rel-norm} Consider the abelian subgroups $ \mcL^*_\mrm{R}\subset \mcL^*_\mrm{S}\subset    \mrm{ISp}(\mcL)$ as in (\ref{R-S-subspaces}).
Let  $R:=\al_{\mcL^*_\mrm{R}}$,  $S:=\al_{\mcL^*_\mrm{S}}$ and $G\subset \Aut(\mfa)$ be any subgroup containing $\al_{\mrm{ISp}(\mcL)}$.
%$al_{\mrm{ISp}(\mcL)}\subgroup G\subgroup \Aut\,\mfa$ be the corresponding subgroups of $ \Aut(\mfa)$.
Then for any $T\in \mrm{ISp}(\mcL)$ the following implication holds:
\beqa
(T^{-1})^t\, \mcL^*_\mrm{S} \subset   \mcL^*_\mrm{R}\quad \Rightarrow \quad \al_{T} \in N_{G}(R,S).
\eeqa
\eel
%%%%%%%%%%%%%%%%%%%%%%%%%%%%%%%%%%%%%%%%%%%%

To be able to apply Theorem~\ref{disjointness-proposition-alg} we  need to choose the `vacuum' state $\om_{\text{vac}}$. 
We do it in the standard manner: Suppose that $\mcL$ is a dense subspace of a Hilbert space $\mfh$ with a scalar 
product $\lan\,\cdot \,,\, \cdot \,\ran$ such that  $\mathbf{\si}(\, \cdot\,, \, \cdot\,)=\mrm{Im}\lan\,\cdot\, ,\, \cdot \, \ran$
and define on the Weyl operators
\beqa
\om_{\text{vac}}(W(\vf)):=e^{-\h \|\vf\|^2}, \quad \vf\in \mcL. \label{vacuum-state-def}
\eeqa 
The resulting GNS representation $\pi_{{\text{vac}}}$, which can be chosen to act on the Fock space $\mcF=\Ga(\mfh)$ (see (\ref{Fock-space}) below),  is faithful and irreducible. 
We have
\beqa
W_{\text{vac}}(\vf):=\pi_{{\text{vac}}}(W(\vf))= e^{a^*(\vf)-a(\vf)}, \label{Weyl-operator}
\eeqa
where $a^*(\,\cdot\,), a(\, \cdot\,)$ are the creation and annihilation operators on $\mcF$. A concrete choice of $\mfh$ will be made in the next section.

%%%%%%%%%%%%%%%%%%%%%%%%%%%%%%%%%%%%%%%%%%%%%%%%%%%
\section{Infravacua in  QED as elements of relative normalizers} \label{QED-section}
%%%%%%%%%%%%%%%%%%%%%%%%%%%%%%%%%%%%%%%%%%%%%%%%%%%

%%%%%%%%%%%%%%%%%%%%%%%%
\subsection{The model}\label{sec:model} 
 \setcounter{equation}{0}
%%%%%%%%%%%%%%%%%%%%%%

We consider  one spinless non-relativistic electron interacting with the second-quantized electromagnetic field in the setting of non-relativistic quantum electrodynamics (QED). For a textbook presentation see \cite{Sp}, we follow here mostly \cite{CFP09}. 

We set $L^2(\real^3;\complex^3):=L^2(\real^3)\otimes \complex^3$   and denote the scalar product by $\lan \,\cdot\,, \, \cdot \, \ran$. The single-photon Hilbert space $\mfh$ is the following space of transverse functions
\beqa
L^2_{\tr}(\real^3;\complex^3):=\{ \vf\in L^2(\real^3;\complex^3)\,|\,  \pmb{k} \cdot \pmb{f}(\pmb{k}) =0 \;\; \text{a.e.}\}
\eeqa 
and we denote by $P_{\tr}: L^2(\real^3;\complex^3)\to L^2(\real^3;\complex^3)$  the orthogonal projection on $L^2_{\tr}(\real^3;\complex^3)$.
%{\cb We note that $P_{\tr}=1\otimes P^{\tr}$ in terms of the decomposition into the radial and angular part, cf. Section~\ref{Map-T}.   }
We {\bc write} $S^2$  for the unit sphere in $\real^3$
and introduce the polarisation vectors $\real^3\ni \vk\mapsto  \pmb{\epsilon}_{\pm}(\vk)\in S^2$, given by, e.g., \cite{LL04}
\begin{eqnarray}
\pmb{\epsilon}_+ ( \pmb{k})= \frac{(k_2, -k_1,0)}{\sqrt{k_1^2 + k_2^2}}, \quad
\pmb{\epsilon}_-( \pmb{k})= \frac{\pmb{k}}{|\pmb{k}|} \times \pmb{\epsilon}_+(\pmb{k}),
\end{eqnarray}
which satisfy $\pmb{k} \cdot \pmb{\epsilon}_\pm(\vk) =0$ and $\pmb{\epsilon}_+(\vk) \cdot \pmb{\epsilon}_-(\vk) =0$ for $\vk=(k_1,k_2,k_3)\in\real^3$. In  terms of these vectors we can write, in the $L^2$-sense,
\beqa
(P_{\tr}\vf)(\vk)=\sum_{\la=\pm }  \big(\vf(\vk)\cdot  \pmb{\epsilon}_\la (\pmb{k} ) \big) \pmb{\epsilon}_\la ( \pmb{k}). \label{transverse-projection}
\eeqa
Next, we denote
by $\mcF$ the symmetric Fock space over $ \mfh=L^2_{\tr}(\real^3; \complex^3)$, which is the fiber Fock space pertaining to the decomposition~(\ref{fiber-hamiltonians}).  More precisely,
\begin{equation}\label{Fock-space}
{\mathcal{F}} := \oplus_{n=0}^\infty {\mathcal{F}}^{(n)}, \quad {\mathcal{F}}^{(n)} :=  \operatorname{Sym}_n (  \mfh  \phantom{}^{\otimes n}),\quad 
{\mathcal{F}}^{(0)} = \mathbb{C}\Omega.
\end{equation}
We define  the quantized electromagnetic vector potential\footnote{up to a normalization constant, which is absorbed
into $\alt^{1/2}$ in (\ref{Hamiltonian-section1}). } with infrared and ultraviolet cut-offs $0\leq \si\leq \ka$  
as an operator on a certain domain in $\mcF$
\beqa
\vA_{[\si,\ka]}(\vx):=\sum_{\la=\pm} \int \fr{d^3k}{\sqrt{{\bc |\vk|}}} \chi_{[\si, \ka]}(|\vk|) \pol(\vk)\big( e^{-i\vk\cdot \vx} a^*_{\la}(\vk) +   e^{i\vk\cdot \vx} a_{\la}(\vk)  \big).
\eeqa
Here $\chi_{\De}$ denotes the characteristic function of a set $\De$ and $a_{\la}(\vk), a^{*}_{\la}(\vk)$ are the standard (improper) creation and annihilation
operators on $\mcF$ such that $[a_{\la}(\vk), a^*_{\la'}(\vk')] =\de_{\la\la'}\de(\vk-\vk')$  and all other commutators vanish.  They are related
to the creation and annihilation operators appearing in (\ref{Weyl-operator}) via 
$a^*(\vf)=\sum_{\la=\pm} \int d^3k\, a_{\la}^*(\vk)\, (\pol(\vk)\cdot \vf(\vk))$,  for $\vf\in \mfh$.

Furthermore, we define the free 
Hamiltonian and momentum operators of the electromagnetic field
\beqa
H_{\pho}=\sum_{\la=\pm} \int d^3k\, |\vk| \, a^*_{\la}(\vk) a_{\la}(\vk), \quad \vP_{\pho}=\sum_{\la=\pm} \int d^3k\, \vk \, a^*_{\la}(\vk) a_{\la}(\vk).
\eeqa
The fiber Hamiltonians, which appeared in the decomposition (\ref{fiber-hamiltonians}), are given by  
\beqa
H_{\vP,\si}=\h(\vP-\vP_{\pho}+ \alt^{1/2}\vA_{[\si,\ka]}(0))^2+H_{\pho},\quad H_{\vP}:=H_{\vP,\si=0}. \label{Hamiltonian-section1}
\eeqa
They are self-adjoint, positive operators on a common domain independent of $\vP$ (see, e.g., \cite{Sp, Hi00, KM14}). 
We denote by $E_{\vP,\si}:=\mrm{inf}\,\mrm{ Spec}(H_{\vP,\si})$, $E_{\vP}:=\mrm{inf}\,\mrm{ Spec}(H_{\vP})$ the respective 
infima of the spectra of $H_{\vP,\si}$, $H_{\vP}$.  They are rotation invariant functions of $\vP$.

Now we recall some spectral results, mostly from \cite{CFP09, FP10},   which we will use  below. From now on we discuss
the regime of low coupling $\ti\al$ and we are interested in 
momenta $\vP$ restricted to the ball
\beqa
\mcS=\{ \vP\in \real^3\,|\, |\vP|< \tfrac{1}{3} \}.
\eeqa 
It is well  known that for any $\si>0$ the operators $H_{\vP,\si}$ have ground-states $\Psi_{\vP,\si}\in \mcF$, $\|\Psi_{\vP,\si}\|=1$, corresponding to
isolated eigenvalues $E_{\vP,\si}$. The dependence $\vP\mapsto E_{\vP,\si}$ is analytic for any fixed $\si>0$
by Kato perturbation theory.   In the limit $\si\to 0$, as the spectral gap closes,  $\Psi_{\vP,\si}$  tend weakly to zero 
\cite{CFP09,  Fr73, Fr74.1, Ch00} and the Hamiltonians $H_{\vP}$ do not have ground-states for $\vP\neq 0$ \cite{HH08}.  
To analyze  this phenomenon, one introduces auxiliary vectors
\beqa
\Phi_{\vP,\si}:=W_\vac(\vv_{\vP, \si})\Psi_{\vP,\si}, \quad W_\vac(\vv_{\vP,\si})=e^{a^*( \vv_{\vP,\si})-a(\vv_{\vP,\si})   }, \label{modified-states}
\eeqa
where $\vv_{\vP,\si}$ has the form
\beqa
\pmb{v}_{\pmb{P},\sigma}(\pmb{k}) =   \alt^{1/2}  P_{\tr} \frac{\chi_{[\sigma, \kappa]} (|\pmb{k}|) }{|\pmb{k}|^{3/2}}
\frac{\nabla E_{\pmb{P},\sigma}}{1  - \hat{\pmb{k}}\cdot \nabla E_{\pmb{P},\sigma}}, \label{v-P-sigma}
\eeqa
{\bcc and} we set $\hat{\vk}:=\vk/|\vk|$ and $\nabla E_{\pmb{P},\sigma}:=\nabla_{\vP}E_{\pmb{P},\sigma}$. The following lemma collects some facts from \cite{CFP09, FP10}.\footnote{Precisely, for (a) and (b) see \cite[Theorem III.3 and Corollary III.4]{FP10}, for (c) see \cite[Eq.~(III.2) and  formula (V.6)]{CFP09} and for (d) \cite[Theorem III.1]{CFP09}.}
%%%%%%%%%%%%%%%%%%%%%%%%%%%%%%%%%%%%%%%%%%%%
\bel\label{spectral} Let $\ti\al>0$ be sufficiently small and $\vP\in \mcS$. Then
\begin{enumerate}
\item[(a)] The function $\vP \mapsto E_{\vP}$ is rotation invariant, twice differentiable and has  strictly positive second derivative with respect to $|\vP|$. 
\item[(b)] $\lim_{\si\to 0} \pa_{\vP}^{\be} E_{\vP,\si}$ exists and equals $\pa_{\vP}^{\be} E_{\vP}$ for $|\be|\leq 2$.
\item[(c)] $|\nabla E_{\vP,\si}|\leq v_{\mathrm{max}}<1$ and $|\nabla E_{\vP}|\leq v_{\mathrm{max}}<1$ for some constant $v_{\mathrm{max}}$, uniformly in $\sigma$ and in $\vP \in \mcS$.
\item[(d)] $\Phi_{\vP}:=\lim_{\si\to 0}\Phi_{\vP,\si}$ exists in norm for a suitable choice of the phases of $\Psi_{\vP,\si}$. 
\end{enumerate}
\eel
%%%%%%%%%%%%%%%%%%%%%%%%%%%%%%
\nin In the following we assume that the phases of $\Psi_{\vP,\si}$ are fixed as in Lemma~\ref{spectral}~(d). Using Lemma~\ref{spectral}~(b) 
we can define the pointwise limit
\beqa
\vv_{\vP}(\vk):=\lim_{\si\to 0}\vv_{\vP,\si}(\vk)=\frac{\alt^{1/2}}{ |\pmb{k}|^{3/2}}  P_{\tr} \chi_{[0, \kappa]} (|\pmb{k}|) 
\frac{ \nabla E_{\pmb{P}} }{1  - \hat{\pmb{k}}\cdot \nabla E_{\pmb{P}}}. \label{v-P}
\eeqa
We note that the expressions $1-\hat{\pmb{k}}\cdot \nabla E_{\pmb{P}, \si}$ and  $ 1-\hat{\pmb{k}}\cdot \nabla E_{\pmb{P}}$ 
in the denominators of (\ref{v-P-sigma}) and (\ref{v-P}) are different from zero by Lemma~\ref{spectral}~(c). 
The fact that   $\vv_{\vP}$ is not in $L^2_{\tr}(\real^3; \complex^3)$ for $0\neq \vP\in \mcS$ will be important below. {\bc We will also use that {\bcc $\vv_{\vP=0}=0$, which is a consequence of rotational invariance.}
%$\nabla E_{\vP=0}=0$ due to rotational invariance.}

%%%%%%%%%%%%%%%%%%%%%%
\subsection{Infravacua as elements of relative normalizers}\label{relative-norm-subsection}
%%%%%%%%%%%%%%%%%%%%%%%%%%%

In this subsection we will give a concrete realization of the structure $\mcL^*_\mrm{R}\subset \mcL^*_\mrm{S}\subset    \mrm{ISp}(\mcL)$
which appeared in  (\ref{R-S-subspaces}). 
The symplectic space $\mcL$, which we will use in the following analysis, is defined as follows: 
\beqa\label{mcL}
\mcL:=\bigcup_{\thetae>0} L^2_{\tr, \thetae}(\real^3; \complex^3),     
\eeqa
where $L^2_{\tr, \thetae}(\real^3; \complex^3):=\{\,\vf\in L^2_{\tr}(\real^3; \complex^3)\,|\, \vf(\vk)=0 \textrm{ for } |\vk|\leq \thetae\,\}$.  
The symplectic form is given by $\pmb{\si}(\,\cdot\,, \, \cdot\,  )=\mrm{Im}\lan \,\cdot\,, \, \cdot \, \ran$.
We introduce the following subspaces of $\mcL^*$, 
\beqa
& &\mcL^*_{\mrm{S}}:=\mcL^*_{\mrm{D}}+\mcL^*_{\mrm{R}},   \textrm{ where } 
\mcL^*_{\mrm{D}}:=\Span_{\real}\{\, \vv_{\vP}  \,|\, \vP\in \mcS\,\}, \ \  \mcL_{\mrm{R}}^*:= L^2_{\tr}(\real^3; \complex^3)_{\real},\quad\quad \label{cloud-definitions}
\eeqa
and the \emph{dressing functions} $\vv_{\vP}$ appeared in (\ref{v-P}). Here $L^2_{\tr}(\real^3; \complex^3)_{\real}$ denotes the subspace of real-valued functions  in $L^2_{\tr}(\real^3; \complex^3)$
and the linear spans in  (\ref{cloud-definitions}) are over the field of real numbers. 
Furthermore, we set here $(\vv, \vf):=\mrm{Im}\lan \vv, \vf\ran$ for  $\vv\in \mcL_{\mrm{S}}^*$,  $\vf\in \mcL$, which
 is well-defined since all $\vf\in \mcL$ vanish near zero.

In the following we  exhibit maps $T\in \mrm{Sp}(\mcL)$ such that $(T^{-1})^t\, \mcL^*_\mrm{S} \subset   \mcL^*_\mrm{R}$.
 By Lemmas~\ref{symplectic-lemma}, \ref{Aut-rel-norm}  such maps, {\bc and the corresponding automorphisms $\al_{T}$}, are elements of  relative normalizers. 
More precisely, we will show in Theorem~\ref{infravacuum-theorem}, that \emph{infravacuum maps}  first introduced by  Kraus, Polley and Reents \cite{Re74,KPR77, Kr82}
have the above mapping property. For this purpose, in Proposition~\ref{theo:convergenceseq} we collect the 
essential features of the infravacuum maps which can be 
found in the literature, up to some technical mismatches
relating, e.g., to different choices of the symplectic space  and different implementation of the infrared regularization. 
We postpone the somewhat lengthy definition
of the Kraus-Polley-Reents maps and the proof of Proposition~\ref{theo:convergenceseq} to Section~\ref{Map-T}.
At the same time our proof of part (a) of this proposition is new and  easier to follow than the corresponding arguments 
available in \cite{Re74,KPR77, Kr82, Ku98}. 
%%%%%%%%%%%%%%%%%%%%%%%%%
\bep\label{theo:convergenceseq} There exist $T\in \mrm{Sp}(\mcL)$ such that{\bc:}
\begin{enumerate}
\item[(a)] The limit $T\vv_{\vP}:=\lim_{n\to \infty} T \vv_{\vP,\si_n}$ exists in $L^2_{\mrm{tr}}(\real^3;\complex^3)$
for a certain subsequence $\{\si_n\}_{n\in\nat}$ tending to zero.
\item[(b)] $\|T\vf\|\leq c\|\vf\|$ for all \emph{real-valued} $\vf\in\mcL$ and {\bc some} $c$ independent of $\vf$.
\end{enumerate}
Such $T$ are called \emph{infravacuum maps}.
\eep
%%%%%%%%%%%%%%%%%%%%%%%%%%%
As noted below formula~(\ref{v-P}), the functions $\vv_{\vP,\si}$  escape from $L^2_{\tr}(\real^3;\complex^3)$ in the limit $\si\to 0$.  
Therefore {\bc $T\vv_{\vP}$ above should be understood as one symbol and} part (a) of Proposition~\ref{theo:convergenceseq} does not follow from part (b).   {\bc Instead, it} demonstrates a remarkable regularizing property
of $T$. 
We stress that this feature is restricted to real-valued functions and an infravacuum map cannot be complex-linear.
This is a consequence of the symplectic property and the following computation
\beqa
\mrm{Im}\lan T\vv_{\vP,\si_n}, T (i\vv_{\vP,\si_n})\ran=\mrm{Im}\lan \vv_{\vP,\si_n}, i\vv_{\vP,\si_n}\ran=\|  \vv_{\vP,\si_n}\|^2 \underset{n\to\infty}{\to}   \infty,
\eeqa
which shows that $T (i\vv_{\vP,\si_n})$ diverges  in $L^2_{\mrm{tr}}(\real^3;\complex^3)$ as $n\to \infty$.  
Another  consequence of  properties (a), (b) from Proposition~\ref{theo:convergenceseq} is that {\bc the} infravacuum maps are elements of relative normalizers, 
which is the main result of this section.
%%%%%%%%%%%%%%%%%%%%%%%%%%%%%%%%%
\bet\label{infravacuum-theorem}  The infravacuum maps $T\in \mrm{Sp}(\mcL)$ from Proposition~\ref{theo:convergenceseq}  satisfy 
\beqa
(T^{-1})^t\mcL^*_{\mrm{S}}\subset  \mcL^*_{\mrm{R}}, \label{infravacuum-mapping-property}
\eeqa
where the  subspaces   $\mcL^*_{\mrm{R}}\subset \mcL^*_{\mrm{S}}$ are given by (\ref{cloud-definitions}). Hence $T\in  N_{ \mrm{ISp}(\mcL) }(\mcL^*_\mrm{R} , \mcL^*_\mrm{S})$
and $\al_{T} \in N_{G}(R,S)$. Here $R:=\al_{\mcL^*_\mrm{R}}$,  $S:=\al_{\mcL^*_\mrm{S}}$ and $G\subset \Aut(\mfa) $ is any
subgroup containing  $\al_{\mrm{ISp}(\mcL)}$. The homomorphism $\al$ is
given by (\ref{automorphism-definition}).
\eet
%%%%%%%%%%%%%%%%%%%%%%%%%%%%%%%%%%
\proof A general element $\vv_{\mrm{S}}\in \mcL^*_{\mrm{S}}$ has the form $\vv_{\mrm{S}}=\vv_{\mrm{D}}+\vv_{\mrm{R}}$, where 
$\vv_{\mrm{D}}\in \mcL^*_{\mrm{D}}$ and $\vv_{\mrm{R}}\in \mcL^*_{\mrm{R}}:=L^2_{\mrm{tr}}(\real^3;\complex^3)_{\real}$.
Thus we have for any $\vf\in \mcL$
\beqa
((T^{-1})^t\vv_{\mrm{S}}, \vf)=(\vv_{\mrm{S}}, T^{-1}\vf )=\mrm{Im}\lan \vv_{\mrm{D}}, T^{-1}\vf\ran+ \mrm{Im}\lan \vv_{\mrm{R}}, T^{-1}\vf\ran.
\label{T-verification}
\eeqa

Clearly, we have $\vv_{\mrm{D}}=\sum_{i=1}^Nc_i\vv_{\vP_i}$ for some $c_i\in \real$, $N\in \nat$, $\vP_i\in \mcS$.  We define 
accordingly its approximant $\vv_{\mrm{D}, \si}:=\sum_{i=1}^Nc_i\vv_{\vP_i, \si}\in \mcL$ and write, using the sequence $\{\si_n\}_{n\in \nat}$ 
from Proposition~\ref{theo:convergenceseq},
\beqa
\mrm{Im}\lan \vv_{\mrm{D}}, T^{-1}\vf\ran=\lim_{n\to \infty} \mrm{Im}\lan \vv_{\mrm{D}, \si_n}, T^{-1}\vf\ran=\lim_{n\to \infty} \mrm{Im}\lan T\vv_{\mrm{D}, \si_n}, \vf\ran=( T\vv_{\mrm{D}}, \vf). \label{symplectic-computation-one}
\eeqa
Here in the first step we used Lemma~\ref{spectral} (b) and the fact that $T^{-1}\vf$ vanishes in some neighbourhood of zero. In the second
step we used that $T$ is symplectic and in the last step we applied Proposition~\ref{theo:convergenceseq} (a). This latter statement also tells
us that $ T\vv_{\mrm{D}}:=\lim_{n\to\infty} T\vv_{\mrm{D}, \si_n}\in \mcL^*_{\mrm{R}}$.

The second term on the right hand side of (\ref{T-verification}) is handled by a similar and simpler consideration: We define 
$\vv_{\mrm{R}}^{\si}(\vk):=\chi_{[\si, \infty)}(|\vk|) \vv_{\mrm{R}}(\vk)$ so that  $\vv_{\mrm{R}}=\lim_{\si\to 0} \vv_{\mrm{R}}^{\si}$
in the norm topology of $ L^2_{\mrm{tr}}(\real^3;\complex^3)$. Since the functions $\vv_{\mrm{R}}^{\si}$
are real-valued, Proposition~\ref{theo:convergenceseq}~(b) gives the existence of $T\vv_{\mrm{R}}:=\lim_{\si\to 0} T \vv_{\mrm{R}}^{\si}\in \mcL^*_{\mrm{R}}$.
Now the proof of (\ref{infravacuum-mapping-property}) is completed by a computation analogous to (\ref{symplectic-computation-one}). 

The last statement of the theorem is a consequence of {\bc Lemmas~\ref{symplectic-lemma}, \ref{Aut-rel-norm}}. \qed
%%%%%%%%%%%%%%%%%%%%%%%%%%%%%%%%%%%%%%%%%%%%%%%

%%%%%%%%%%%%%%%%%%%%%%%%%%%%%%%%%%%%
\subsection{Infravacua and velocity superselection} \label{curing-velocity-superselection}
%%%%%%%%%%%%%%%%%%%%%%%%%%%%%%%%%%%

\textcolor{black}{Proceeding towards the problem of velocity superselection,  we define the following states on the Weyl algebra $\mfa$ over the symplectic space $\mcL$ as introduced in \eqref{mcL}. }
\beqa
\om_{\pmb{P}}(A):=\lim_{\sigma\to 0}\lan \Psi_{\pmb{P},\sigma}, \pi_{{\text{vac}}}(A)  \Psi_{\pmb{P}, \sigma}\ran=\lan \Phi_{\pmb{P}}, \pi_{{\text{vac}}}(\al_{\vv_{\pmb{P} }  }(A))  \Phi_{\pmb{P}}\ran,
\quad A\in\mfa. \label{state-formula}
\eeqa
\textcolor{black}{These states describe plane-wave configurations of the electron with velocity~$\nabla E_{\vP}$.}
Here $\pi_{{\text{vac}}}$ is the vacuum representation defined in (\ref{vacuum-state-def}) and in the second step we used  (\ref{modified-states}), the Weyl relations (\ref{Weyl-relations}),  Lemma~\ref{spectral}~(d), definition~(\ref{automorphism-definition}) and the specifications
$(\vv, \vf)=\mrm{Im}\lan \vv, \vf \ran$, $\pmb{\si}(\vf_1, \vf_2)=\mrm{Im}\lan \vf_1, \vf_2\ran$  for $\vv\in \mcL^*_{\mrm{S}}$, $\vf, \vf_1,\vf_2\in \mcL$,  which appeared below (\ref{cloud-definitions}) and above (\ref{vacuum-state-def}), respectively. As  $\om_{\pmb{P}}$  are pure states on $\mfa$ (cf. \cite[Corollary 10.2.5]{KR}), we can use the framework
of Subsection~\ref{superselection-theory} to study the corresponding superselection structure. The starting point is the following proposition, whose proof
is given in Section~\ref{proof-of-velocity-superselection}.  
%%%%%%%%%%%%%%%%%%%%%%%%%%%%%%%%%
\bep\label{velocity-superselection-x} We have  $[\om_{\vP}]_{\sect}  \neq [\om_{\vP'}]_{\sect}$ for all $\vP, \vP'\in \mcS$ such that $\vP\neq \vP'$.

\eep
%%%%%%%%%%%%%%%%%%%%%%%%%%%%%%%%%%

We recall that  for the present model, and a similar model describing the electron with spin,  disjointness of $ [\om_{\vP=0}]_{\sect}  $ and $[\om_{\vP'}]_{\sect}$, $\vP'\neq 0$ was
shown in \cite{CFP09, CF07} by exploiting the absence of the number operator in non-Fock representations. In the Nelson model and in 
a semi-relativistic model of QED 
disjointness for all $\vP\neq \vP'$ (from suitable balls around zero) was verified in \cite{Fr73,KM14} using theory of infinite tensor products of Hilbert spaces. Our proof in Section~\ref{proof-of-velocity-superselection},
inspired by \cite[Lemma 2.2]{Ku98}, exhibits central sequences in $\mfa$ which can distinguish $[\om_{\vP}]_{\sect}$ from $[\om_{\vP'}]_{\sect}$. We think this argument is quite simple and intuitive. 

Our main result, concerning the problem of velocity superselection in the model of non-relativistic QED, is stated below.
The {\bc conjugate classes are}  defined using an arbitrary subgroup  $ \al_{\mrm{ISp}(\mcL)} \subset G \subset \Aut(\mfa) $,  
the Fock vacuum  $[\om_{\text{vac}}]_{\sect}$ as the reference sector  and two different types of background automorphisms:
In part (a) we consider `regular' backgrounds in which case  the conjugation procedure does not merge the disjoint sectors from Proposition~\ref{velocity-superselection-x} into coinciding classes. In part~(b) we show that such {\bc a} merging effect  is achieved{\bc,} if the infravacuum automorphisms are used as a background.     
%%%%%%%%%%%%%%%%%%%%%%%%%%%%%
\bet\label{disjointness-proposition-nr-qed}  For the family of states $\{\om_{\vP}\}_{\vP\in \mcS}$ defined in (\ref{state-formula}) the following is true:
\begin{enumerate}
\item[(a)] Suppose that $\be\in G\subset \mrm{Aut}(\mfa)$ is unitarily implemented in  $\pi_{{\mrm{vac}}}$ (e.g., $\be=\mrm{id}$). Then, 
for all $0\neq \vP'\in \mcS$, 
\beqa 
\ov{[[\om_{\vP=0}]_{\mrm{In}(\mfa) }  ] }^{\be} \neq \ov{[  [\om_{\vP'}]_{\mrm{In}(\mfa)}  ]   }^{\be} \textrm{ and }  \ov{ \ov{[[\om_{\vP=0}]_{\mrm{In}(\mfa)} ] }}^{\be} \neq\ov{ \ov{[  [\om_{\vP'}]_{\mrm{In}(\mfa)}  ]   }}^{\be} \textrm{ holds. }
\eeqa

\item[(b)] Let $T\in \mrm{Sp}(\mcL)$ be an infravacuum map (cf. Proposition~\ref{theo:convergenceseq}) and $\al_{T}\in \Aut(\mfa)$ be
given by (\ref{automorphism-definition}). Then, for all $\vP, \vP'\in \mcS$,
 \beqa 
\ov{[[\om_{\vP}]_{\mrm{In}(\mfa) }  ] }^{\al_T} = \ov{[  [\om_{\vP'}]_{\mrm{In}(\mfa)}  ]   }^{\al_T} \textrm{ and }  \ov{ \ov{[[\om_{\vP}]_{\mrm{In}(\mfa)} ] }}^{\al_T} =
\ov{ \ov{[  [\om_{\vP'}]_{\mrm{In}(\mfa)}  ]   }}^{\al_T} \textrm{ holds. }
\eeqa

\end{enumerate}
\eet
%%%%%%%%%%%%%%%%%%%%%%%%%%%%%%
\proof We consider the subspaces $\mcL^*_{\mrm{R}}$, $\mcL^*_{\mrm{S}}$ defined in (\ref{cloud-definitions}){\bc,}  set 
$R:=\al_{\mcL^*_\mrm{R}}$,  $S:=\al_{\mcL^*_\mrm{S}}$ and fix  $ \al_{\mrm{ISp}(\mcL)} \subset G \subset \Aut(\mfa) $ as in Theorem~\ref{infravacuum-theorem}. 
Now parts (a) and (b) can be inferred from the corresponding parts of Theorem~\ref{disjointness-proposition-alg} as follows:

(a) We recall that $G_{[\om_{\text{vac}}]_{\sect}}$ coincides with the group of automorphisms from $G$ which are unitarily implemented in the vacuum representation (cf. Theorem~\ref{sector-proposition}), thus $\be^{-1}\circ G_{[\om_{\text{vac}}]_{\sect}}\circ \be=G_{[\om_{\text{vac}}]_{\sect}}$. Furthermore,  we obtain from  formula~(\ref{state-formula}) 
\beqa\label{inner}
\om_{\vP}(A)=\lan \Om, \pi_{{\text{vac}}}(U_{\vP} \al_{\vv_{\vP}}(A) U_{\vP}^*)\Om\ran=\lan \Om, \pi_{{\text{vac}}}( \al_{\vv_{\vP}}(\ti{U}_{\vP}A\ti{U}^*_{\vP}) )\Om\ran, \quad A\in \mfa.
\eeqa
 Here we found a unitary $U_{\vP}\in \mfa $ such that $\Phi_{\vP}=\pi_{{\text{vac}}}(U_{\vP}^*)\Om$  (cf.  \cite[Theorem 10.2.1]{KR})
 and set $\ti{U}_{\vP}:=\al_{\vv_{\vP}}^{-1}(U_{\vP})$. Therefore, 
 \beqa
 \, [\om_{\vP}]_{\sect}=[\om_{\text{vac}}\circ \al_{\vv_{\vP}}]_{\sect} \quad \textrm{and} \quad [\om_{\vP=0}]_{\sect}=[\om_{\text{vac}}]_{\sect}, \label{equality-of-sectors}
 \eeqa
where the second equality follows from  $\vv_{\vP=0}=0$ (cf. definition~(\ref{v-P})). Given these identifications, 
 the statement follows from Proposition~\ref{velocity-superselection-x} and Theorem~\ref{disjointness-proposition-alg} (a).
 
 (b) By Theorem~\ref{infravacuum-theorem}, we have $\al_{T}\in N_G(R,S)$. Now, since $\al_{\vv_{\vP}}\in S$, the first equality in (\ref{equality-of-sectors}) 
and Theorem~\ref{disjointness-proposition-alg} (b) give the claim. The statement concerning the second conjugate classes can also be obtained
from Theorem~\ref{disjointness-proposition-alg} (c) noting that $[\om_{\vP}]_{\sect\cdot S}=[\om_{\text{vac}}]_{\sect \cdot S}$ for any $\vP\in \mcS$. \qed
%%%%%%%%%%%%%%%%%%%%%%%%%%%%%%%%%%%%%%%%%%%%%%%%%%%%%%%%%%%%%%%%%%%%%%%%%%%%%

 Theorem~\ref{disjointness-proposition-nr-qed} (b) shows that the effect of velocity superselection is eliminated at the level of 
  the   conjugate
classes with respect to  the infravacuum background. It turns out that the infravacuum automorphisms can also be used to cure velocity superselection at the level of
states:
By formula~(\ref{state-formula}) we can write $\om_{\pmb{P}} =  \om_{\Phi_{\vP}} \circ  \al_{\vv_{\pmb{P} }  }$, where we set $\om_{\Phi}(\,\cdot\,) :=\lan \Phi, \pi_{{\text{vac}}}(\,\cdot \,) \Phi\ran$ for any unit vector $\Phi\in \mcF$.
We define
\beqa
\om_{\vP,T}:=\om_{\Phi_{\vP}} \circ\al_{T}\circ \al_{\vv_{\vP}},  \label{state-one}
\eeqa
where $T$ is an infravacuum map and $\vP\in \mcS$. {\bc These} are modifications of the states $\om_{\pmb{P}}$ above by inserting the infravacuum between the  state $\om_{\Phi_{\vP}}$ of the `undressed electron' and the automorphism $ \al_{\vv_{\pmb{P} }  }$ constructed from the dressing transformation.
As we show below, all $\om_{\vP,T}$ lie in the same sector and hence the corresponding GNS representations
are unitarily equivalent.  {\bc Thus} there is no velocity superselection in this situation. 
%%%%%%%%%%%%%%%%%%%%%%%%%%%%%%
\bet \label{theorem:equivalence}
 Let $T$ be an infravacuum map. For all $\vP, \vP'\in \mcS$, we have $[\om_{\vP,T}]_{\mrm{In}(\mfa) }=[\om_{\vP',T}]_{\mrm{In}(\mfa) }$.
\eet
%%%%%%%%%%%%%%%%%%%%%%%%%%%%%%
\proof Since for any {\bc  unit vector} $\Phi{\bc \in \mcF}$ 
we can find $i_{\Phi}\in \sect$ such that $\om_{\Phi}=\om_{\text{vac}}\circ i_{\Phi}$, \cite[Theorem 10.2.1]{KR}, we write
\beqa
\, [\om_{\vP,T}]_{\mrm{In}(\mfa)}=[\om_{\text{vac}}]_{\mrm{In}(\mfa)}\circ \al_T\circ \al_{\vv_{\vP}}=[\om_{\text{vac}}]_{\mrm{In}(\mfa)}\circ \al_{(T^{-1})^{t} \vv_{\vP}} \circ \al_T
=[\om_{\text{vac}}]_{\mrm{In}(\mfa)}\circ \al_T, \,\,\,\,\,\,
\eeqa
where in the second step we used the homomorphism property of $\al$ (cf.~(\ref{automorphism-definition})) and
the multiplication law in  $\mrm{ISp}(\mcL)$ which gives $T\cdot \vv_{\vP} \cdot T^{-1}=(T^{-1})^t \vv_{\vP}$.
In the third step we used Theorem~\ref{infravacuum-theorem} which ensures that $(T^{-1})^t \vv_{\vP}\in L^2_{\mrm{tr}}(\real^3;\complex^3)$
and thus  $\al_{(T^{-1})^{t} \vv_{\vP}} \in G_{[\om_{\text{vac}}]_{\mrm{In}(\mfa)}  }$ by the last statement of Proposition~\ref{sector-proposition}. \qed

As a consequence of Theorem~\ref{theorem:equivalence}, we have $\ov{[[\om_{\vP,T}]_{\mrm{In}(\mfa)}  ] }^{\al_T} = \ov{[  [\om_{\vP',T}]_{\mrm{In}(\mfa)}  ]   }^{\al_T}$ for all $\vP \neq \vP'$. 
In fact, since $[\om_{\vP,T}]_{\mrm{In}(\mfa)} $ is independent of $\vP$ by the above result, it is clear from Definition~\ref{conjugation-def}
 that $G^{\al_{T}}_{\omega_{\vP,T}, \omega_{\text{vac}}}$ is independent of $\vP$ as well, which implies the same for $\ov{[[\om_{\vP,T}]_{\mrm{In}(\mfa)}  ] }^{{\bc \al_T} } $.  

As the states (\ref{state-one}) rely on the non-canonical decomposition of the single-electron  state   {\bc  $\om_{\vP}$}  into the `undressed electron' {\bc $\om_{\Phi_{\vP}} $} and the `dressing' {\bc $\al_{\vv_{\vP}}$},
 their physical realization appears difficult, even as a thought experiment. { To improve on this, one could } consider the following family of states, which does not rely on such a decomposition:
 \begin{equation}
\om_{\vP}^{T} :={\bc \om_{\vP}\circ \al_T}=\om_{\Phi_{\vP}}\circ\al_{\vv_{\vP}}\circ \al_{T}. \label{state-two}
\end{equation}
However, velocity superselection \emph{persists} for these states, as shown below: The sectors $[\om^T_{\vP}]_{\mrm{In}(\mfa) }$ depend on $\vP$, and correspondingly, the GNS representations are not equivalent.  It is clear from the proof below that replacing $\al_T$ in (\ref{state-two}) with  any other automorphism of $\mfa$ will not improve the situation. Thus the modification (\ref{state-one}) is the only possibility we can see to cure velocity superselection at the level of states with the help of the infravacuum.
\bep  Let $T$ be an infravacuum map. Then, $[\om^T_{\vP}]_{\mrm{In}(\mfa) }\neq [\om^T_{\vP'}]_{\mrm{In}(\mfa) }$ for all  $\vP, \vP'\in \mcS$ such that $\vP\neq \vP'$.
\eep 
\proof
We argue by contradiction: Suppose $[\om^T_{\vP}]_{\mrm{In}(\mfa)}=[\om^T_{\vP'}]_{\mrm{In}(\mfa)}$ for some $\vP\neq \vP'$ from $\mcS$.
This means $[\om_{\vP}]_{\mrm{In}(\mfa)}\circ \al_T=[\om_{\vP'}]_{\mrm{In}(\mfa)}\circ \al_T$ in conflict with Proposition~\ref{velocity-superselection-x}. \qed

%%%%%%%%%%%%%%%%%%%%%%%%%%%%
\section{Kraus-Polley-Reents infravacua  } \label{Map-T}
%%%%%%%%%%%%%%%%%%%%%%%%%%%%
\setcounter{equation}{0}
 In this section we provide a proof of Proposition~\ref{theo:convergenceseq}.

First, we introduce the decomposition of functions from $\mcL=\bigcup_{\thetae>0} L^2_{\mrm{tr}, \thetae}(\real^3; \complex^3) $ into radial and angular parts
\beqa
L^2_{\mrm{tr}, \thetae}(\real^3; \complex^3)=L^2_{\thetae}(\real_+)\otimes L^2_{\mrm{tr}} (S^2;\complex^3). \label{radial-angular-decomposition}
\eeqa   
Here $L^2_{\thetae}(\real_+)$ is the space of radial functions with  measure $|\vk|^2d|\vk|$ vanishing for $|\vk|\leq \thetae$ and we write $L^2(\real_+):=L^2_{\thetae=0}(\real_+)$. Further,
$L^2_{\mrm{tr}} (S^2;\complex^3)$  is the space of  transverse angular functions with the natural spherical measure 
 $d\Om(\theta, \phi):=\sin\theta d\theta d\phi$. The latter space is the range of the projection
 $P^{\mrm{tr}}: L^2(S^2; \complex^3) \to L^2(S^2;\complex^3)$ given by
\beqa
(P^{\mrm{tr}}\vf)(\hat{\vk}):=\sum_{\la=\pm }  \big(\vf(\hat{\vk})\cdot  \pmb{\epsilon}_\la (\hat{\pmb{k}}) \big) \pmb{\epsilon}_\la (\hat{\pmb{k}}). \label{upper-transverse}
\eeqa
As in \cite{KPR77}, we introduce the transverse, vector valued spherical harmonics  for $\ell \neq 0$, $-\ell \leq m\leq \ell$,
\begin{equation}\label{vSH}
\pmb{Y}_{\ell m \pm} = \frac{1}{\sqrt{\ell(\ell +1)}}\pmb{a}_\pm Y_{\ell m} \quad \text{with} 
\quad \pmb{a}_+ = |\pmb{k}| \nabla_{\pmb{k}} \quad \text{and} \quad \pmb{a}_- = \hat{\pmb{k}} \times \pmb{a}_+
\end{equation}
which are elements of $L^2_{\mrm{tr}} (S^2; \complex^3)$. Here $Y_{\ell m}$ are the usual spherical harmonics  which form an orthonormal basis in $L^2(S^2)$ with measure $d\Om$. As a consequence,
 $\pmb{Y}_{\ell m\pm}$ with $\ell \neq 0$ form a complete basis of $L^2_{\mrm{tr}} (S^2; \complex^3)$.
Furthermore, $\vJ^2 \pmb{Y}_{\ell m\pm}=\ell  (\ell+1) \pmb{Y}_{\ell m\pm}$,
where $\vJ=\vL+\vS$ is the total angular momentum of photons including the orbital and spin part  (the latter being a constant matrix on $\complex^3$).  These properties of $\pmb{Y}_{\ell m\pm}$ are discussed more extensively in Appendix~\ref{appendix:vectorSH}. 

Keeping the decomposition~(\ref{radial-angular-decomposition}) in mind, we define the relevant symplectic map~$T$. 
%%%%%%%%%%%%%%%%%%%%%%%%%%%%%%%%%%%%%%%%%%%%%%%%%%
\bed\label{infravacuum}  The  Kraus-Polley-Reents infravacuum  map $T: \mcL\to \mcL$ is defined as follows\footnote{By small modifications of this definition one can easily obtain many different maps satisfying properties (a), (b) from Proposition~\ref{theo:convergenceseq}. We restrict attention to the simplest choice.}:   

\begin{itemize}

\item We introduce  sequences $\eps_i:=2^{-(i-1)}\ka$ and $b_i:= \fr{1}{i}$ for $i=1,2,3\ldots$.

\item We define functions $\xi_i(|\vk|):=\fr{\chi_{[\eps_{i+1},\eps_i] }(|\vk|)}{|\vk|^{3/2}}\in L^2(\real_+)$ and their normalized  counterparts $\ti\xi_i(|\vk|):= \xi_i(|\vk|) / \|\xi_i\|_{L^2(\real_+)}$. 

\item 
We define the orthogonal projections $\pmb{Q}_i: L^2_{\tr}(\real^3;\complex^3)  \to L^2_{\tr}(\real^3;\complex^3)$
and $\tilde{\pmb{Q}}_i: L^2_{\tr}(S^2;\complex^3)\to L^2_{\tr}(S^2; \complex^3)$ given by
\begin{equation}\label{pmbQi}
\pmb{Q}_i = |\ti\xi_i\ran\lan \ti \xi_i| \otimes \tilde{\pmb{Q}}_i \quad \text{with} \quad \tilde{\pmb{Q}}_i := \sum_{0< \ell \leq i} \sum_{m=-\ell}^\ell \sum_{\lambda = \pm} | \pmb{Y}_{\ell m \lambda} \rangle \langle \pmb{Y}_{\ell m \lambda} |.
\end{equation}
Since $ |\ti\xi_i\ran\lan \ti \xi_i|: L^2_{\thetae}(\real_+)\to L^2_{\eps_{i+1}}(\real_+)$ we have $\pmb{Q}_i: \mcL\to \mcL$.

\item We introduce the complex-linear maps $T_1, T_2: \mcL\to\mcL$ 
\begin{equation}\label{T1T2}
 T_1 := \pmb{1} + \operatorname*{s-lim}_{n \to \infty} \sum_{i =1}^n (b_i -1)\pmb{Q}_i, \quad 
 T_2 := \pmb{1} +\operatorname*{s-lim}_{n \to \infty} \sum_{i=1}^n \big(\frac{1}{b_i} - 1 \big)\pmb{Q}_i.
\end{equation}
Clearly, the maps are well-defined, since $\pmb{Q}_i\vf=0$ for any $\vf\in\mcL$ and sufficiently large $i$. We will denote by $T_{1,n}, T_{2,n}$ the respective approximants.

\item Let $\hGa$ be the complex conjugation in momentum space. We introduce the real-linear map $T: \mcL\to \mcL$ given by
\beqa
T:=T_1\fr{1+\hGa}{2}+T_2\fr{1-\hGa}{2},
\eeqa
and denote by $T_n$ the respective approximants. In other words, writing $\vf=\vf_1+i \vf_2\in \mcL$, where $\vf_1,\vf_2$ are real-valued,
we have $T\vf=T_1\vf_1+iT_2\vf_2$.
\end{itemize}
\eed
%%%%%%%%%%%%%%%%%%%%%%%%%%%%%%

\nin  Following  \cite{KPR77, Ku98}, we show that  $T$ is symplectic and invertible.  We also verify that $T_1$ extends to a bounded map
on  $L^2_{\mrm{tr}}(\real^3; \complex^3 )$ which yields Proposition~\ref{theo:convergenceseq} (b). 
%%%%%%%%%%%%%%%%%%%%%%%%%%%%%%
\bel\label{T-prop} We have for  $\vf_0,\vf_1,\vf_2\in \mcL$
\beqa
& &\lan T_{1}\vf_1, T_{2}\vf_2\ran=\lan \vf_1, \vf_2\ran, \quad T_1T_2\vf_0=\vf_0, \quad T_2T_1\vf_0=\vf_0. \label{T-properties}
\eeqa
Furthermore, $T$ is symplectic and invertible with $T^{-1}= T_2\fr{1+\hGa}{2}+T_1\fr{1-\hGa}{2}$  and $T_1$ extends to a bounded operator on $L^2_{\mrm{tr}}(\real^3;\complex^3)$.   
\eel
%%%%%%%%%%%%%%%%%%%%%%%%%%%%
\proof Making use of self-adjointness of $T_{1,n}, T_{2,n}$ on $L^2_{\tr}(\real^3;\complex^3)$ and of the fact that $\pmb{Q}_i$ are mutually orthogonal 
projections, we have
\beqa
\lan T_{1,n}\vf_1, T_{2,n}\vf_2\ran  
= \lan \vf_1,  (\pmb{1} +  \sum_{i' =1}^n (b_{i'} -1)\pmb{Q}_{i'}) ( \pmb{1}+ \sum_{i=1}^n \big(\frac{1}{b_i} - 1 \big)\pmb{Q}_i) \vf_2 )\ran=\lan \vf_1, \vf_2\ran.\,\,\,\,\,
  \eeqa
This computation gives   properties~(\ref{T-properties}). Now choosing $\vf,\vf'\in \mcL$ and decomposing them into real and imaginary parts as $\vf=\vf_1+i\vf_2$, $\vf'=\vf'_1+i\vf'_2$,
we have
\beqa
\mrm{Im}\lan T_1 \vf_1+iT_2\vf_2,  T_1 \vf'_1+iT_2\vf'_2\ran=-\lan \vf_2, \vf_1'\ran+ \lan \vf_1, \vf_2'\ran= \mrm{Im}\lan \vf, \vf'\ran.
\eeqa
Next, we set $\hat{T}:= T_2\fr{1+\hGa}{2}+T_1\fr{1-\hGa}{2} $ and check $T\hat{T}\vf=\hat{T}T\vf=\vf$ using (\ref{T-properties}). 
 Finally, the boundedness of $T_1$ follows from the computation 
\beqa
\|\sum_{i=1}^{\infty} (b_i-1)\pmb{Q}_i\vf\|^2\leq C\sum_{i=1}^n\lan \vf, \pmb{Q}_i \vf\ran\leq C\|\vf\|^2,
\eeqa
where we could choose a finite $n$ in the first step since $\vf\in \mcL$.  As $C$ is independent of $n$,
this concludes the proof. \qed 
%%%%%%%%%%%%%%%%%%%%%%%%%%%%%%%%%%%%%%%%%%

 In preparation for the proof of  Proposition~\ref{theo:convergenceseq} (a), we set $\si_n=\varepsilon_n$, $n\in\nat$, where $\varepsilon_n$ appeared in Definition~\ref{infravacuum}.
Recalling definition (\ref{v-P-sigma}) and setting as above $\xi_i(|\vk|):=|\vk|^{-3/2} \chi_{[\eps_{i+1},\eps_i] }(|\vk|)$, we write 
\beqa
\vv_{\vP,\si_n}(\vk)=\sum_{i=1}^{n-1}(\xi_i(|\vk|) \otimes \vvp_{\vP,n}^{\tr}(\hat{\vk})), \label{definitions-of-vectors}
\eeqa
where
\beqa
\vvp_{\pmb{P},n}^{\tr}(\hat{\vk}) :=  P^{\tr}\vvp_{\pmb{P},n}(\hat{\vk}), \quad \vvp_{\pmb{P},n}(\hat{\vk}) :={\bc \ti{\al}^{1/2} }\frac{ \nabla E_{\pmb{P}, \sigma_n}  } {1- \hat{\vk}\cdot \nabla E_{\vP,\si_n}}.
\eeqa 
We also define 
\beqa
\vvp_{\pmb{P}}^{\tr}(\hat{\vk}):=  P^{\tr}\vvp_{\pmb{P}}(\hat{\vk}), \quad \vvp_{\pmb{P}}(\hat{\vk}):= {\bc \ti{\al}^{1/2} } \frac{\nabla E_{\pmb{P}}  } {1- \hat{\vk}\cdot \nabla E_{\vP}}.  \label{future-reference}
\eeqa
As a consequence of Lemma~\ref{spectral} and dominated convergence we have
\beqa
\lim_{n\to\infty}\|  \vvp_{\pmb{P},n}-  \vvp_{\pmb{P}}\|_{L^2(S^2;\complex^3)}=0 \label{pointlike-convergence}
\eeqa
from which it also follows that  $\lim_{n\to\infty}\|  \vvp^{\tr}_{\pmb{P},n}-\vvp^{\tr}_{\pmb{P}}\|_{L^2_{\mrm{tr}}(S^2;\complex^3)}=0$ since $\| P^{\tr}\| =1$.

Since $\vv_{\vP,\si_n}\in \mcL$ are real-valued, we have $T\vv_{\vP,\si_n}=T_1 \vv_{\vP,\si_n}$. 
By definition of $T_1$ and $\pmb{Q}_j${\bc, } and  by the support properties of $\xi_i$, we have
\begin{align}
T_1 \vv_{\vP,\si_n} &= (\pmb{1} +  \sum_{j =1}^{\infty} (b_j-1)\pmb{Q}_j) \vv_{\vP,\si_n}
 =\sum_{i=1}^{n-1} (\pmb{1} + (b_i -1) \pmb{Q}_i) (\xi_i\otimes  \vvp_{\vP,n}^{\tr}).
\end{align}
We set $\psi_{i,n}:=(  \pmb{1} + (b_i -1) \pmb{Q}_i  ) (\xi_i\otimes  \vvp_{\vP,n}^{\tr})$ and note that $\lan \psi_{i,n}, \psi_{i',n'}\ran_{ L^2_{\tr}(\real^3;\complex^3)}=0$ for $i\neq i'$
and any $n,n'\in\nat$. We also observe that by (\ref{pointlike-convergence})
\beqa\label{lim:psi}
\psi_i:=\lim_{n\to \infty} \psi_{i,n}= (  \pmb{1} + (b_i -1) \pmb{Q}_i  ) (\xi_i\otimes  \vvp^{\mrm{tr} }_{\vP})
\eeqa
exists in $L^2_{\tr}(\real^3;\complex^3)$. 
We want to show that the vector  $T_1\vv_{\vP}:= \sum_{i=1}^{\infty}\psi_{i}$ exists in $L^2_{\tr}(\real^3;\complex^3)$, as it is a natural candidate for $\lim_{n\to\infty} T_1\vv_{\vP,\si_n}$.
For this, we prove the following lemma.
\begin{lemma}\label{lemma:psi}
{\bc Fix $i\in \nat$. Then} there exists $C_i>0$ independent of $n$ such that
\beqa
\sup_{n\in\nat} \| \psi_{i,n}\|_{ L^2_{\tr}(\real^3;\complex^3) }^2\leq C_i, \quad \text{where} \quad \sum_{i=1}^{\infty}C_i<\infty. \label{key-relation}
\eeqa
In particular, {\bc $\sum_{i=1}^\infty \psi_{i,n}$ and $\sum_{i=1}^\infty \psi_{i}$ exist in $L^2_{\tr}(\real^3;\complex^3)$, and}  $\lim_{n \to \infty}\sum_{i=1}^\infty \psi_{i,n}=\sum_{i=1}^\infty \psi_i$ {\bc in the same topology.}
\end{lemma}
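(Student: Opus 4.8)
The plan is to use the radial--angular factorisation (\ref{radial-angular-decomposition}) to reduce the whole statement to a single uniform bound on the angular functions $\vvp^{\tr}_{\vP,n}$ on $S^2$. First I would rewrite $\psi_{i,n}$ in factorised form: since $\xi_i=\|\xi_i\|_{L^2(\real_+)}\,\ti\xi_i$ and the $\xi_i$ have pairwise disjoint supports, the radial part $|\ti\xi_i\ran\lan\ti\xi_i|$ of $\pmb{Q}_i$ fixes $\xi_i$, so $\psi_{i,n}=\xi_i\otimes\bigl(\pmb{1}+(b_i-1)\tilde{\pmb{Q}}_i\bigr)\vvp^{\tr}_{\vP,n}$. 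A direct computation gives $\|\xi_i\|^2_{L^2(\real_+)}=\int_{\eps_{i+1}}^{\eps_i}|\vk|^{-1}\,d|\vk|=\ln 2$, the same for every $i$. Moreover $\tilde{\pmb{Q}}_i$ is an orthogonal projection, so $\bigl(\pmb{1}+(b_i-1)\tilde{\pmb{Q}}_i\bigr)\vvp^{\tr}_{\vP,n}=b_i\,\tilde{\pmb{Q}}_i\vvp^{\tr}_{\vP,n}+(\pmb{1}-\tilde{\pmb{Q}}_i)\vvp^{\tr}_{\vP,n}$ is an orthogonal sum, and therefore
\[
\|\psi_{i,n}\|^2_{L^2_{\tr}(\real^3;\complex^3)}=(\ln 2)\,\Bigl(b_i^2\,\|\tilde{\pmb{Q}}_i\vvp^{\tr}_{\vP,n}\|^2+\|(\pmb{1}-\tilde{\pmb{Q}}_i)\vvp^{\tr}_{\vP,n}\|^2\Bigr),
\]
all angular norms being taken in $L^2_{\tr}(S^2;\complex^3)$. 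The first summand is harmless: $b_i^2=i^{-2}$ and $\|\tilde{\pmb{Q}}_i\vvp^{\tr}_{\vP,n}\|\le\|\vvp_{\vP,n}\|_{L^2(S^2;\complex^3)}\le\sqrt{4\pi}\,\alt^{1/2}\,v_{\mathrm{max}}/(1-v_{\mathrm{max}})=:c_0$ uniformly in $n$ and $\vP\in\mcS$, using $|\nabla E_{\vP,\si_n}|\le v_{\mathrm{max}}<1$ from Lemma~\ref{spectral}~(c); so this term is $\le c_0^2\,i^{-2}$.

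The essential point is the second summand. Since $\{\pmb{Y}_{\ell m\lambda}\}_{\ell\ge 1}$ is a complete orthonormal basis of $L^2_{\tr}(S^2;\complex^3)$ made of eigenvectors $\vJ^2\pmb{Y}_{\ell m\lambda}=\ell(\ell+1)\pmb{Y}_{\ell m\lambda}$, the operator $\pmb{1}-\tilde{\pmb{Q}}_i$ on $L^2_{\tr}(S^2;\complex^3)$ is the spectral projection of $\vJ^2$ onto eigenvalues with $\ell>i$; as $\vvp^{\tr}_{\vP,n}=P^{\tr}\vvp_{\vP,n}$ lies in $L^2_{\tr}(S^2;\complex^3)$, this gives $\|(\pmb{1}-\tilde{\pmb{Q}}_i)\vvp^{\tr}_{\vP,n}\|^2\le\bigl((i+1)(i+2)\bigr)^{-2}\,\|\vJ^2\vvp^{\tr}_{\vP,n}\|^2$. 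I would then establish $M:=\sup_{n\in\nat,\ \vP\in\mcS}\|\vJ^2\vvp^{\tr}_{\vP,n}\|_{L^2(S^2;\complex^3)}<\infty$. For this, note that $\vvp_{\vP,n}(\hat{\vk})=\alt^{1/2}\,\nabla E_{\vP,\si_n}\,\bigl(1-\hat{\vk}\cdot\nabla E_{\vP,\si_n}\bigr)^{-1}$ is, as a function of $\hat{\vk}\in S^2$, a constant vector of length $\le\alt^{1/2}v_{\mathrm{max}}$ divided by a smooth function bounded below by $1-v_{\mathrm{max}}>0$, all bounds uniform in $n$ and $\vP\in\mcS$ by Lemma~\ref{spectral}~(c); hence every $C^k(S^2)$-norm of $\vvp_{\vP,n}$ is bounded uniformly in $n$ and $\vP$. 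The transverse projection $P^{\tr}$ acts as pointwise multiplication by the matrix-valued function $\hat{\vk}\mapsto(\delta_{jl}-\hat{k}_j\hat{k}_l)_{j,l=1,2,3}$, which is smooth on all of $S^2$ although the individual polarisation vectors $\pmb{\epsilon}_\pm$ are coordinate-singular; hence the same uniform $C^k(S^2)$-bounds hold for $\vvp^{\tr}_{\vP,n}$. Since $\vJ^2=\vL^2+2\vL\cdot\vS+\vS^2$ is a second-order differential operator on $S^2$ with bounded (matrix) coefficients, $\|\vJ^2\vvp^{\tr}_{\vP,n}\|_{L^2}$ is dominated by the $C^2(S^2)$-norm of $\vvp^{\tr}_{\vP,n}$, which gives $M<\infty$. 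Consequently $\|(\pmb{1}-\tilde{\pmb{Q}}_i)\vvp^{\tr}_{\vP,n}\|^2\le M^2\,i^{-4}$ uniformly in $n$, and one may take $C_i:=(\ln 2)\bigl(c_0^2\,i^{-2}+M^2\,i^{-4}\bigr)$, which satisfies $\sup_{n}\|\psi_{i,n}\|^2\le C_i$ and $\sum_{i}C_i<\infty$.

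It remains to draw the stated consequences. For each fixed $n$ the $\psi_{i,n}$ are pairwise orthogonal in $i$ (noted just before (\ref{lim:psi})), so $\|\sum_{i=I}^{I'}\psi_{i,n}\|^2=\sum_{i=I}^{I'}\|\psi_{i,n}\|^2\le\sum_{i\ge I}C_i$ for all $n$; thus $\sum_{i=1}^\infty\psi_{i,n}$ is a convergent series in $L^2_{\tr}(\real^3;\complex^3)$ with tail bound uniform in $n$. Since $\psi_{i,n}\to\psi_i$ in norm as $n\to\infty$ for each $i$ (by (\ref{lim:psi})), letting $n\to\infty$ in the orthogonality relations and in $\|\psi_{i,n}\|^2\le C_i$ shows that the $\psi_i$ are pairwise orthogonal with $\|\psi_i\|^2\le C_i$, so $\sum_{i=1}^\infty\psi_i$ converges as well. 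Finally, put $\rho_N:=\bigl(\sum_{i>N}C_i\bigr)^{1/2}$, so $\rho_N\to 0$; for every $N$,
\[
\Bigl\|\sum_{i=1}^\infty\psi_{i,n}-\sum_{i=1}^\infty\psi_i\Bigr\|\le\sum_{i=1}^{N}\|\psi_{i,n}-\psi_i\|+2\rho_N,
\]
and the finite sum on the right tends to $0$ as $n\to\infty$ by (\ref{lim:psi}); letting first $n\to\infty$ and then $N\to\infty$ yields $\lim_{n\to\infty}\sum_{i=1}^\infty\psi_{i,n}=\sum_{i=1}^\infty\psi_i$ in $L^2_{\tr}(\real^3;\complex^3)$.

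I expect the one genuinely delicate step to be the uniform bound $M<\infty$: one must check that the coordinate-singular transverse projection does not spoil the smoothness of $\vvp_{\vP,n}$ — which works precisely because the relevant matrix $\delta_{jl}-\hat{k}_j\hat{k}_l$ (equivalently $\sum_{\lambda=\pm}\pmb{\epsilon}_\lambda(\hat{\vk})\otimes\pmb{\epsilon}_\lambda(\hat{\vk})$) is globally smooth on $S^2$ — and that the a priori bounds on $\nabla E_{\vP,\si_n}$ are uniform in the cutoff $\si_n$, which is exactly what Lemma~\ref{spectral}~(c) provides. Everything else is routine manipulation of orthogonal series.
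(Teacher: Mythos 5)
Your proof is correct and its skeleton coincides with the paper's: the same factorisation $\psi_{i,n}=\xi_i\otimes(\pmb{1}+(b_i-1)\tilde{\pmb{Q}}_i)\vvp^{\tr}_{\vP,n}$, the same radial norm $\ln(\eps_i/\eps_{i+1})=\ln 2$, the same split into the $b_i\tilde{\pmb{Q}}_i$ piece and the high-$\ell$ tail, and the same orthogonality-plus-uniform-tail argument for interchanging $\lim_n$ with $\sum_i$ (which the paper compresses into ``by dominated convergence''). The one step you handle genuinely differently is the tail estimate. The paper uses only the \emph{quadratic form} of $\vJ^2$: it writes $\|(\pmb{1}-\tilde{\pmb{Q}}_i)\vvp^{\tr}_{\vP,n}\|^2\leq \tfrac{2}{i^2}\bigl(\lan\vvp_{\vP,n},\vL^2\vvp_{\vP,n}\ran+2\|\vvp_{\vP,n}\|^2\bigr)$, exploits $P^{\tr}\pmb{Y}_{\ell m\la}=\pmb{Y}_{\ell m\la}$ to replace $\vvp^{\tr}_{\vP,n}$ by the manifestly smooth $\vvp_{\vP,n}$ in all inner products, and then evaluates $\lan\vvp_{\vP,n},\vL^2\vvp_{\vP,n}\ran$ by an explicit integration by parts; this yields only $O(i^{-2})$ decay but never requires differentiating the transversally projected function. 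You instead bound the tail by $((i+1)(i+2))^{-2}\|\vJ^2\vvp^{\tr}_{\vP,n}\|^2$, which forces you to control $\vvp^{\tr}_{\vP,n}$ in $C^2(S^2)$; your observation that $P^{\tr}$ acts by the globally smooth matrix $\de_{jl}-\hat k_j\hat k_l$ (rather than through the coordinate-singular $\pmb{\epsilon}_\pm$) is exactly the point that makes this work, and the uniformity in $n$ again comes from Lemma~\ref{spectral}~(c). Your route buys a stronger $O(i^{-4})$ decay and avoids the explicit spherical integral, at the cost of invoking the differential-operator realisation of $\vJ^2$ on $C^2$ transverse fields and its agreement with the spectral one (a routine integration by parts against the smooth $\pmb{Y}_{\ell m\la}$, but worth stating); the paper's quadratic-form route is more self-contained. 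Both give summable $C_i$, which is all the lemma needs.
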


\begin{proof}
To verify (\ref{key-relation}) we estimate the norm 
\beqa
\|\psi_{i,n}\|^2_{ L^2_{\tr}(\real^3;\complex^3) }\y&=&\y\|\xi_i\otimes \big(\pmb{1}+(b_i -1) \ti{\pmb{Q}}_i \big)\vvp_{\pmb{P},n}^{\tr}\|_{ L^2_{\tr}(\real^3;\complex^3) }^2\non\\
\y &=&\y \|\xi_i\|^2_{L^2(\real_+)} \| (\pmb{1}- \ti{\pmb{Q}}_i) \vvp_{\pmb{P},n}^{\tr} + b_i \ti{\pmb{Q}}_i\vvp_{\pmb{P},n}^{\tr}\|^2_{L^2_{\mrm{tr}}(S^2;\complex^3)}\non\\
\y&\leq &\y  \ln \fr{\epsilon_i}{\epsilon_{i+1}}\big( \| (\pmb{1}- \ti{\pmb{Q}}_i) \vvp_{\pmb{P},n}^{\tr}\|^2_{L^2_{\mrm{tr}}(S^2;\complex^3)}+ b_i^2   \|\vvp_{\pmb{P},n}^{\tr}\|^2_{L^2_{\mrm{tr}}(S^2;\complex^3)} \big). \label{formula-for-squares}
\eeqa

For the last term, \eqref{pointlike-convergence} gives $\|\vvp^{\tr}_{\pmb{P},n}\|^2_{L^2_{\mrm{tr}}(S^2;\complex^3)}\leq c_1$ uniformly in $n$.

Concerning the term with $(\pmb{1}- \ti{\pmb{Q}}_i)$, we make use of the completeness of the vector-valued spherical harmonics:
\beqa
\| (\pmb{1}- \ti{\pmb{Q}}_i) \vvp_{\pmb{P},n}^{\tr}\|^2_{L^2_{\mrm{tr}}(S^2;\complex^3)}
\y&=&\y\sum_{\ell>i} \sum_{m=-\ell}^{\ell}\sum_{\la=\pm} |\lan \pmb{Y}_{\ell  m\la}, \vvp_{\pmb{P},n}\ran_{L^2(S^2;\complex^3) }|^2\non\\
\y& \leq &\y
 \fr{2 }{i^2}\big( \lan  \vvp_{\pmb{P},n}, \vL^2 \vvp_{\pmb{P},n}\ran_{L^2(S^2;\complex^3)} + 2\|  \vvp_{\pmb{P},n} \|^2_{L^2(S^2;\complex^3)} \big), 
\label{spherical-harmonics-expression}
\eeqa
where $\vvp_{\pmb{P},n}$ is defined in (\ref{future-reference}) and we replaced $\vvp_{\pmb{P},n}^{\tr}$ with $\vvp_{\pmb{P},n}$, which is justified since  $\pmb{Y}_{\ell m\la}$ are transverse.
In the  last step we exploited  $\vJ^2  \pmb{Y}_{\ell m\pm}=\ell(\ell+1) \pmb{Y}_{\ell m\pm}$, inserted $\vJ^2= (\vL+ \vS)^2$,  applied the Cauchy-Schwarz inequality  to terms of the form $L_i S_i$ with $i=1,2,3$, and   used $\| \vS^2 \| = 2$.
Here $\|\vvp_{\pmb{P},n} \|^2_{L^2(S^2;\complex^3)}$ is uniformly bounded in $n$ due to \eqref{pointlike-convergence}.
To estimate the first term in (\ref{spherical-harmonics-expression}) we write 
$\pmb{L}^2 = -  \frac{1}{\sin\theta} \frac{\partial}{\partial \theta}\Big(\sin\theta \frac{\partial}{\partial \theta}  \Big) - \frac{1}{\sin^2\theta}\frac{\partial^2}{\partial \phi^2} $ in spherical coordinates, choosing our reference frame in $\real^3$ such that $\nabla E_{\vP,\si_n}$ is in the direction 
of the third axis, and compute 
\begin{multline}
\lan \vvp_{\pmb{P},n}, \vL^2 \vvp_{\pmb{P},n}\ran_{L^2(S^2;\complex^3)}\\
=-2\pi {\bc \ti \al} |\nabla E_{\pmb{P}, \sigma_n}|^2 \int_{-1}^{1} \frac{  d\cos\,\theta  } {1- \cos\theta |\nabla E_{\vP,\si_n}|}  \frac{\partial}{\partial \cos\theta}\Big(\sin^2\theta \frac{\partial}{\partial \cos\theta}\Big)  \frac{1} {1- \cos\theta |\nabla E_{\vP,\si_n}|}\\
=2\pi {\bc \ti \al} |\nabla E_{\pmb{P}, \sigma_n}|^2 \int_{-1}^{1}  dt\,(1-t^2)\Big(\fr{d}{dt} \fr{1}{1-t|\nabla E_{\vP,\si_n}| } \Big)^2\leq c_2,
\end{multline}
where we introduced the variable $t:=\cos\theta$. 
This bound is uniform in $n$ by Lemma~\ref{spectral}. Coming back to formula~(\ref{formula-for-squares}) and collecting our estimates we obtain
\beqa
\|\psi_{i,n}\|^2_{ L^2_{\tr}(\real^3;\complex^3) }\leq  C\ln\,\fr{\epsilon_i}{\epsilon_{i+1}}\bigg( \fr{1}{i^2}+ b_i^2\bigg),
\eeqa
where $C$ is independent of $n$. Given the choice of $\eps_i, b_i$, $i\in\nat$, in Definition~\ref{infravacuum}, estimates~(\ref{key-relation}) follow. Using this and \eqref{lim:psi}, the last statement of the lemma follows by dominated convergence, noting that the $\psi_{i,n}$ are mutually orthogonal.
\end{proof}

\nin Using these results we are now ready to show convergence of $T \vv_{\vP,\si_n}$.\\
\nin \emph{Proof of Proposition~\ref{theo:convergenceseq} (a)}.
We write
\beqa
T_1 \vv_{\vP,\si_n}-T_1\vv_{\vP}=\sum_{i=1}^{\infty}  (\psi_{i,n}-\psi_i)-\sum_{i= n}^{\infty}  \psi_{i,n}.
\eeqa
In the limit $n \to \infty$ we apply Lemma~\ref{lemma:psi} to the first sum. As for the second term, since
\beqa
\Big\| \sum_{i= n}^\infty \psi_{i,n} \Big\|_{ L^2_{\tr}(\real^3;\complex^3) }  ^2  = \sum_{i= n }^\infty \| \psi_{i,n} \|^2_{ L^2_{\tr}(\real^3;\complex^3) },
\eeqa
where $\| \psi_{i,n} \|^2_{ L^2_{\tr}(\real^3;\complex^3) }\leq C_i$, see (\ref{key-relation}), it is the reminder term of a convergent series, and therefore vanishes for $n \to \infty$.
This concludes the proof. \qed

%%%%%%%%%%%%%%%%%%%%%%%%%%%%%%%%%

\section{Proof of velocity superselection  } \label{proof-of-velocity-superselection}
\setcounter{equation}{0}

%%%%%%%%%%%%%%%%%%%%%%%%%%%%%%%%
The goal of this section is to provide a proof of Proposition~\ref{velocity-superselection-x}.

Suppose, by contradiction, that $[\om_{\vP}]_{\sect}=[\om_{\vP'}]_{\sect}$ for some $\vP\neq \vP'$ from $\mcS$. That is,
$\om_{\vP}=\om_{\vP'}\circ i_{\vP, \vP'}$ for some $i_{\vP, \vP'}\in \sect$. Furthermore, using formula~(\ref{state-formula}),
we can write
\beqa
\om_{\Phi_{\vP}}\circ \al_{\vv_{\vP}}= \om_{\Phi_{\vP'}}\circ \al_{\vv_{\vP'}}\circ  i_{\vP, \vP'}, \label{inner-formula}
\eeqa
where, as before, $\om_{\Phi}(\,\cdot\,)=\lan \Phi, \pi_{{\text{vac}}}(\,\cdot \,) \Phi\ran$. Again, {\bc for any unit vector $\Phi\in \mcF$} we can find $i_{\Phi}\in \sect$ such that $\om_{\Phi}=\om_{\text{vac}}\circ i_{\Phi}$. 
{\bc Hence}, we obtain from (\ref{inner-formula}) that
\beqa
\om_{\text{vac}}\circ \al_{\vv_{\vP}}= \om_{\text{vac}}\circ  \ti{i}_{\vP, \vP'}\circ \al_{\vv_{\vP'}} \label{equality-to-disprove}
\eeqa
for another $\ti{i}_{\vP, \vP'}\in \sect$.
To disprove this equality, we choose  some $\vg\in \mcL$, purely imaginary and integrable, set $\vg_{s}(\vk):=s^{3/2}\vg(s \vk)$,  $s>0$,
and consider the sequence $s\mapsto  W(\vg_{s})$.  We will evaluate this sequence  on the two states appearing  in (\ref{equality-to-disprove}).

First, using formula~(\ref{automorphism-definition}), we get
$\al_{ \vv_{ \pmb{P} } }(W(\vg_{s}))= e^{-2i\mrm{Im} \lan \vv_{\pmb{P}}, \vg_{s} \ran} W(\vg_{s})$. With the help of~(\ref{v-P}), 
the fact that $\vg$ is transverse, and the dominated convergence theorem, we get
\beqa
\lim_{s\to\infty}\lan \vv_{\pmb{P}}, \vg_s\ran
\y&=&\y\lim_{s\to\infty} \alt^{\bc 1/2} \int d^3k\,  \frac{\chi_{[0, \kappa]} (|\pmb{k}|/s)}{|\pmb{k}|^{3/2}}\frac{ \nabla E_{ \pmb{P} }  }{1  - \hat{\pmb{k}}\cdot \nabla E_{\pmb{P}}  } \cdot\vg( \vk)\non\\
& &\ph{444444}=  \alt^{\bc 1/2} \int d^3k\, \frac{1}{|\pmb{k}|^{3/2}}\frac{ \nabla E_{\pmb{P} } }{1  - \hat{\pmb{k}}\cdot \nabla E_{\pmb{P}} }  \cdot  \vg( \vk),
\label{scalar-product-formula}
\eeqa
 and denote the last expression by $\lan \vv_{\pmb{P}}, \vg_{\infty}\ran$.  Since  $s\mapsto  W(\vg_{s})$ is a central sequence (see Lemma~\ref{central-sequence} below),  
equality~(\ref{equality-to-disprove}) and definition (\ref{vacuum-state-def}) imply
\beqa
e^{-2i \mrm{Im}\lan \vv_{\pmb{P}}, \vg_{\infty}\ran}=e^{-2i \mrm{Im}\lan \vv_{\pmb{P'}}, \vg_{\infty}\ran}. \label{exponential-functions}
\eeqa

Now we want to achieve a contradiction by choosing $\vg$ so that the above equality fails. For this purpose we first set $\vg=i\vg_1$, where $\vg_1$ is real.
Recalling that $\vv_{\pmb{P}}$ is real, we use (\ref{scalar-product-formula}) to write 
\beqa
\mrm{Im}\lan \vv_{\pmb{P}}, \vg_{\infty}\ran-\mrm{Im}\lan \vv_{\pmb{P'}}, \vg_{\infty}\ran
\y &=&\y  \ti\alpha^{\bc 1/2} \int d^3k\, \bigg( \frac{ \nabla E_{\vP} } {1  - \hat{\vk}\cdot \nabla E_{\vP}} -  \frac{ \nabla E_{\vP'}} {1  - \hat{\vk}\cdot  \nabla E_{ \vP'} }  \bigg) \! \cdot \! \fr{\vg_1( \vk)}{|\vk|^{3/2}}\non \\
\y &=&\y  \ti\alpha^{\bc 1/2}  \int d^3k\, \bigg( \frac{ \nabla E_{\vP} -\hat{\vk}  } {1  - \hat{\vk}\cdot \nabla E_{\vP}} -  \frac{ \nabla E_{\vP'} -\hat{\vk} } {1  - \hat{\vk}\cdot  \nabla E_{ \vP'} }  \bigg) \! \cdot \! \fr{\vg_1( \vk)}{|\vk|^{3/2}}, \,\,\,\,\, \,\,\,\,\,\,\,\,\,\, \label{bracket}
\eeqa
where in the second step we used that $\vg_1$ is transverse to take the transverse part of the expression in bracket. We denote the
expression in bracket in (\ref{bracket}) by $F_{\vP,\vP'}(\hat{\vk})$ 
and define $\vg_1$ as follows:
\beqa
\vg_1(\vk):=C\fr{F_{\vP,\vP'}(\hat{\vk})} {| F_{\vP,\vP'}(\hat{\vk})|^2 } \chi_{\vP,\vP'}(\hat\vk)\chi_{[\si, \ka]}(|\vk|), \label{vg1}
\eeqa
where we chose some $C\neq 0$,  $0<\si<\ka$  and $\chi_{\vP, \vP'}$ is a non-zero, bounded, positive function from $L^2(S^2)$ which vanishes near 
$\mrm{Span}\{\nabla E_{\vP},\nabla E_{\vP'}\}$.  We note that  by Lemma~\ref{spectral} (a) we have $\nabla E_{\vP}\neq \nabla E_{\vP'}$ for $\vP\neq \vP'$. 

Now it follows from 
Lemma~\ref{geometry} below that  the denominator in (\ref{vg1}) is non-zero on the support of $ \chi_{\vP,\vP'}$. With this
definition, (\ref{bracket}) gives
\beqa
\mrm{Im}\lan \vv_{\pmb{P}}, \vg_{\infty}\ran-\mrm{Im}\lan \vv_{\pmb{P'}}, \vg_{\infty}\ran=
C   \alt^{\bc 1/2} \int d^3k\,  \chi_{\vP,\vP'}(\hat\vk)\fr{\chi_{[\si, \ka]}(|\vk|) }{|\vk|^{3/2}},
\eeqa
which is manifestly non-zero. By varying $C$ we can avoid equality in (\ref{exponential-functions}) due to the periodicity of the exponential function. This 
concludes the proof of Proposition~\ref{velocity-superselection-x}. 
%%%%%%%%%%%%%%%%
\bel  \label{central-sequence} For $\vg\in \mcL\cap L^1_{\mrm{tr}}(\real^3;\complex^3)$ and $s>0$  we set $\vg_{s}(\vk):=s^{3/2}\vg(s \vk)$.
Then  $s\mapsto   W(\vg_{s})$ is a central sequence, i.e.,
\beqa
\lim_{s\to \infty}\|[A, W(\vg_{s})]\|=0 \textrm{ for any } A\in\mfa.
\eeqa
\eel
%%%%%%%%%%%%%%%%%%% 
\proof  For any given $\vf\in \mcL$ we compute the commutator
\beqa
\,[W(\vf), W(\vg_s)]\y&=&
 \big(e^{-i\mrm{Im}\lan \vf, \vg_s\ran }-   e^{i\mrm{Im}\lan \vf, \vg_s\ran } \big)W(\vf+\vg_s).\label{commutator-formula-superselection-proof}
\eeqa
Now we  find $\vf_n\in \mcL$ which are bounded functions\footnote{i.e., each $\vk\mapsto |\vf_n(\vk)|$ is bounded, where $|\,\cdot \,|$ means norm in $\complex^3$.} 
and such that $\|\vf-\vf_n\|\to 0$ as $n\to\infty$. The expression in the exponential in the bracket above has the form
\beqa
\,\lan \vf, \vg_s\ran=\fr{1}{s^{3/2}} \int d^3k\, \bar{\vf_n}(\vk/s)\cdot \vg(\vk) + \lan \vf-\vf_n, \vg_s\ran.
\eeqa
Using that $\vg\in L^1_{\mrm{tr}}(\real^3;\complex^3)$ and each $\vf_n$ is bounded, we can apply the dominated convergence theorem to the integral above for any fixed $n$.
Exploiting in addition that  $\|\vg_s\|=\|\vg\|$, we get  $\lim_{s\to\infty}\,\lan \vf, \vg_s\ran=0$.  Consequently, (\ref{commutator-formula-superselection-proof}) gives
\beqa
\lim_{s\to \infty}\|[W(\vf), W(\vg_s)]\|=0 \textrm{ for any } \vf\in \mcL. \label{f0}
\eeqa
Since any $A\in \mfa$ can be approximated in norm by finite linear combinations of $W(\vf)$, $\vf \in \mcL$, the proof is complete. \qed
%%%%%%%%%%%%%%%%%%%%%%%%%%%%%%%%%%

%%%%%%%%%%%%%%%
\bel\label{geometry} For any $\vv\in \real^3$, $|\vv|<1$, consider the function on $S^2$
\beqa
F_{\vv}(\hat{\vk}):=  \fr{\vv-\hat{\vk} } {1  - \hat{\vk} \cdot \vv}.
\eeqa
Suppose that $\vv_1, \vv_2\in\real^3$ are such that $\vv_1\neq \vv_2$. Then, for all $\hat\vk\in S^2$ such that $\hat\vk\notin\mrm{Span}\{ \vv_1, \vv_2\}$, we have 
\beqa
F_{\vv_1}(\hat\vk)\neq F_{\vv_2}(\hat\vk).
\eeqa
\eel
%%%%%%%%%%%%%%%%%%%%%%%%

\proof Assume $\vv_1 \neq \vv_2$ and suppose that
\begin{equation}\label{Fproof}
\fr{\vv_1-\hat{\vk} } {1  - \hat{\vk} \cdot \vv_1}= \fr{\vv_2-\hat{\vk} } {1  - \hat{\vk} \cdot \vv_2}.
\end{equation}
Then $\hat{\vk} \cdot \vv_1 \neq \hat{\vk} \cdot \vv_2$. (If not, then \eqref{Fproof} implies $\vv_1 = \vv_2$ which is a contradiction.)

From \eqref{Fproof} we have
\begin{equation}
 \hat{\vk} = \fr{1  - \hat{\vk} \cdot \vv_2}{\hat{\vk} \cdot \vv_1 - \hat{\vk} \cdot \vv_2}\vv_1 - \fr{1  - \hat{\vk} \cdot \vv_1}{\hat{\vk} \cdot \vv_1 - \hat{\vk} \cdot \vv_2}\vv_2.
\end{equation}
This implies $\hat\vk\in\mrm{Span}\{ \vv_1, \vv_2\}$. \qed

%%%%%%%%%%%%%%%%%%%%%%%%%%%%%
\section{Conclusions}
%%%%%%%%%%%%%%%%%%%%%%%%%%%%%
\setcounter{equation}{0}

In this paper we introduced {\bcc the} concept of a relative normalizer for two subgroups $R\subset S$
of a group $G$ {\bc given by $N_G(R,S):=\{\, g\in G\,|\, g\cdot S\cdot g^{-1}\subset R\,\}$}.  We could show that the inhomogeneous symplectic group and the group of automorphisms
of the corresponding CCR algebra admit non-trivial relative normalizers. They are given by the infravacuum 
maps of Kraus, Polley and Reents and the corresponding Bogolubov transformations. 
Moreover, we studied the impact of
such relative normalizers on the superselection theory of this CCR algebra. We  gave canonical definitions of
the conjugate and second conjugate class of a given superselection sector with respect to a  reference `vacuum' state and 
a `background' automorphism. Then we showed that distinct sectors may give rise to coinciding conjugate and second conjugate
classes if they are computed with respect to the infravacuum background. This shed a new light on the problem of velocity superselection 
in non-relativistic QED.

 Our findings warrant further investigations of  infravacuum representations in  QED. 
 One question which is left open is how large  the second conjugate classes of states of physical interest are.
 It is easy to conclude from 
 Lemma~\ref{conjugation-lemma}  and Proposition~\ref{velocity-superselection-x}  that
 $\ov{\ov{  [[\om_{\vac}]_{\mrm{In}(\mfa)}]}}^{\al_{T}}\neq  \ov{\ov{  [[\om_{\vac}\circ \al_T]_{\mrm{In}(\mfa)}]}}^{\al_{T}}$,
 thus in general second conjugate classes are proper subsets of the set of pure states.  
 We showed in Theorem~\ref{disjointness-proposition-alg} (c)  that second conjugate classes of physical states contain their usual
 soft-photon dressing, given by singular coherent representations. However,  we do not know if
 the inclusion in this latter result is an equality or not. As many other examples of
 soft-photon clouds can be found in the literature (see, e.g., \cite{BD84}) 
 it is not obvious that the equality should hold. 
 
 Another interesting direction is a resolution of velocity superselection at the level of states, following up
 on Theorem~\ref{theorem:equivalence}. The non-canonical character of the decomposition (\ref{state-one})
 could probably be overcome by forming suitable equivalence classes of all possible decompositions into
  the `undressed electron' and the `dressing'. However, we believe that there is a more natural approach to
  this problem which is less dependent on the infravacuum maps $T$: This is to choose the symplectic space $\mcL$
  in such a way that the algebra $\mfa$ has a local structure and consider the restrictions of $\om_{\vP}$ to
  the subalgebra $\mfa(V_+)\subset \mfa$ of the electromagnetic fields localized in the future lightcone $V_+$.
  We remark that such a choice of $\mcL$ is by no means prevented by the non-relativistic character of the
  considered model.  We conjecture that the (highly reducible) GNS~representations of $\om_{\vP} \restriction \mfa(V_+)$ 
  are  unitarily equivalent for different $\vP$ and that  they are closely related to
  the second conjugate classes $\ov{\ov{  [[\om_{\vP}]_{\mrm{In}(\mfa)}]}}^{\al_{T}}$ considered  in the present work.
  This research direction, which will be pursued elsewhere, aims at a verification in a   
 concrete model of the general ideas from \cite{BR14, AD15}.
 %is disjoint from
 %the second conjugate class of the infravacuum sector $[\om_{\vac}\circ \al_T]_{\mrm{In}\mfa}$,

 Yet another important direction is to address the problem of the  sharp mass of the electron in non-relativistic QED.
For this purpose one needs to transform the model Hamiltonian to a {\bcc Kraus-Polley-Reents} representation in 
a suitable way. We remark in this context that the formal expression  $\al_{T^{-1}}(H_{\pmb{P} })$
{\bc appears to have} infrared-regularized interaction terms, hinting at a possible presence of ground states.  On the other hand, 
$\al_{T^{-1}}$ acts also on $H_{\pho}$ and $P_{\pho}$  which enter in the formula for $H_{\pmb{P} }$.
The resulting modification seems in conflict with the existence of ground states, even in the absence of interaction. 
Thus we believe that a naive application of $\al_{T^{-1}}$ as a dressing transformation at time zero
will not yield a sharp mass of the electron yet. It may help to express $\al_{T^{-1}}$ in terms of asymptotic
(incoming and outgoing) fields and define accordingly two infravacuum Hamiltonians. This strategy
 is consistent with recent works in the setting of algebraic QFT 
which stress the role of the arrow of time for curing the infrared problems \cite{BR14, AD15}.

On a more speculative side,  we think the proposal of Hawking, Perry and Strominger \cite{HPS16} concerning the
black-hole information paradox should be reconsidered from the infravacuum perspective. If the 
relevant infrared degrees of freedom of the gravitational field  can be blurred by  infravacuum-type  radiation of arbitrarily low energy,    
can they really encode information about the history of the black-hole formation? We hope to come back to the
above questions in  future publications.

\begin{appendix}

%%%%%%%%%%%%%%%%%%%%%%%%%%%%%
\section{Some auxiliary lemmas about $C^*$-algebras}
%%%%%%%%%%%%%%%%%%%%%%%%%%%%

For the reader's convenience we recall some standard facts from the theory of $C^*$-algebras
which we use in our paper. We refer to \cite[Chapter 10]{KR}  for a more extensive discussion.
%%%%%%%%%%%%%%%%%%%%%%%%%%%%%%%%%%%%%%
\bel\label{Lemma-A-two}  Let $(\pi,\hil)$ be an irreducible representation of a $C^*$-algebra $\mfa$ and let
$\Psi\in \hil$ be a unit vector. Then the GNS representation $(\ti{\pi}, \ti\hil, \ti\Om)$
induced by the state
\beqa
\om(A)=\lan \Psi, \pi(A)\Psi\ran, \quad A\in \mfa\bc{,} 
\eeqa
is unitarily equivalent to $(\pi,\hil)$.
\eel
%%%%%%%%%%%%%%%%%%%%%%%%%%
\proof  The map $V: \hil \to \ti{\hil}$, given by $V\pi(A)\Psi=\ti{\pi}(A){\bc \ti\Om}$, $A\in \mfa$,
is densely defined by irreducibility of $\pi$ and has a dense range by cyclicity of ${\bc \ti{\Om}}$. 
By a straightforward computation one checks that $V$ is an isometry and $V\pi(A)=\ti\pi(A)V$
for all $A\in \mfa$. \qed\\
%%%%%%%%%%%%%%%%%%%%%%%%%%%
%%%%%%%%%%%%%%%%%%%%%%%%%%
\nin In the following proposition we write $[\om]:=[\om ]_{\mrm{In}(\mfa)}$ for brevity. 
%%%%%%%%%%%%%%%%%%%%%%%%%%
\bep\label{sector-proposition} Let $\om$ be a pure state on $\mfa$ and $(\pi, \hil, \Om)$ its GNS representation.
Then $[\om]$ coincides with  the set of all states whose GNS representations are unitarily equivalent to~$\pi$.
Furthermore,    $[\om\circ \ga]=[\om]$ iff  $\ga\in \Aut(\mfa)$ is unitarily implementable in $\pi$. 
\eep  
%%%%%%%%%%%%%%%%%%%%%%%%%%
\proof Let $\om_1\in [\om]$ and denote by $(\pi_1, \hil_1, \Om_1)$ its GNS representation.  Since $\om_1=\om\circ \mrm{Ad}U$ for some unitary $U\in \mfa$,
 we can write
\beqa
& &\om(A)=\lan\Om, \pi(A)\Om\ran, \, \\
& &\om_1(A)=\lan \Om_1, \pi_1(A)\Om_1\ran=\lan \Om, \pi(UAU^*)\Om\ran=\lan \pi(U^*)\Om, \pi(A) \pi(U^*)\Om\ran,
\eeqa
for $A\in \mfa$. Now Lemma~\ref{Lemma-A-two} gives unitary equivalence of $\pi$ and $\pi_1$.

Conversely, suppose that $\pi$ and $\pi_1$ are unitarily equivalent, i.e., $\pi_1(A)=V\pi(A)V^*$ for all $A\in \mfa$ and some
unitary $V: \hil\to \hil_1$. Thus we can write
\beqa
\om_1(A)=\lan \Om_1, V\pi(A)V^* \Om_1\ran=\lan \Om, \pi(U)\pi(A)\pi(U)^*\Om\ran=\om(UAU^*),
\eeqa
where we used the irreducibility of $\pi$  and the resulting existence of a unitary $U\in \mfa$ such that ${\bc \pi(U)^* }\Om=V^* \Om_1$. 
{\bc This follows from the Kadison transitivity theorem  \cite[Theorem 10.2.1]{KR}.}

As for the last statement, suppose that $\ga$ is unitarily implementable in $\pi$, that is,
\beqa
\pi\circ \ga=\mrm{Ad}U_{\ga}\circ \pi
\eeqa
for some unitary $U_{\ga}$ on $\hil$. Thus we can write for any $A\in \mfa$
\begin{align}
\om(\ga(A))&=\lan \Om, \pi(\ga(A)) \Om\ran=\lan {\bc U_{\ga}^*}\Om, \pi(A) {\bc U_{\ga}^* }\Om\ran\\
&=\lan \pi(V_{\ga}^*)\Om, \pi(A) \pi(V_{\ga}^*)\Om\ran=\om(V_{\ga}AV_{\ga}^*),
\end{align}
where we used again \cite[Theorem 10.2.1]{KR} to find a unitary $V_{\ga}\in \mfa$ such that $\pi(V_{\ga}^*)\Om={\bc U_{\ga}^* }\Om$.

Now suppose that $\om$ and $\om\circ\ga$ are in the same sector, i.e., $\om=\om\circ\ga\circ i$ for some $i\in \mrm{In}(\mfa)$. Then  $\ga\circ i$
leaves $\om$ invariant{\bc,} hence it is unitarily implementable by the GNS theorem. As any $i\in \mrm{In}(\mfa)$ is unitarily implementable, we conclude the proof.  \qed
%%%%%%%%%%%%%%%%%%%%%%%%%%%%%%%%%%%%%%%%%%%
\section{Vector valued spherical harmonics}\label{appendix:vectorSH}
%%%%%%%%%%%%%%%%%%%%%%%%%%%%%%%%%%%%%%%%%%%
\setcounter{equation}{0}

 As we did not find a satisfactory reference, in this appendix we  summarize the  basic properties of the vector valued spherical harmonics from
Section~\ref{Map-T}:
\begin{equation}\label{vSH-one}
\pmb{Y}_{\ell m \pm} = \frac{1}{\sqrt{\ell(\ell +1)}}\pmb{a}_\pm Y_{\ell m} \quad \text{with} 
\quad \pmb{a}_+ = |\pmb{k}| \nabla_{\pmb{k}} \quad \text{and} \quad \pmb{a}_- = \hat{\pmb{k}} \times \pmb{a}_+,
\end{equation}
where $Y_{\ell m}$ are the usual spherical harmonics, orthonormal with respect to the measure $d\Om(\theta, \phi)=\sin\, \theta\, d\theta d\phi$.
For this purpose we recall that the total angular momentum of a photon is a self-adjoint operator on $L^2_{\mrm{tr}}(\real^3;\complex^3)$ given by
\beqa
 \vJ=\vL+\vS, \quad \vL:= -i \pmb{k} \times \nabla_{\pmb{k}}, \quad \vS=(S_1, S_2, S_3), \quad (S_k{\bc\pmb{\psi}} ):=i\pmb{e}_k\times {\bc \pmb{\psi}},
\eeqa
where ${\bc\pmb{\psi}}\in    L^2_{\mrm{tr}}(\real^3;\complex^3)$ and    $\{ \pmb{e}_k   \}_{k=1,2,3}$ is the canonical basis in $\real^3$.  
The operators ${\bc \{S_k\}_{k=1,2,3} }$  satisfy the standard angular momentum commutation relations and   $\vS^2=2$ holds true  (cf.~\cite[\S 58, Problem~2]{LL}).
\begin{proposition}
The  vector valued spherical harmonics $\pmb{Y}_{\ell m \pm}$, $\ell\in \nat$, $-\ell\leq m \leq \ell$,  given by  equation~\eqref{vSH-one}
\begin{enumerate}
\item[(1)] satisfy $\pmb{J}^2 \pmb{Y}_{\ell m \pm}= \ell(\ell+1)\pmb{Y}_{\ell m \pm}$ and $\pmb{J}_3 \pmb{Y}_{\ell m \pm}= m \pmb{Y}_{\ell m \pm}$;
\item[(2)] form (a) an orthonormal and (b) complete basis of $L^2_{\mathrm{tr}}(S^2; \mathbb{C}^3)$.  
\end{enumerate}
\end{proposition}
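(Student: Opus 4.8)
The plan is to prove (1) first and then (2)(a) and (2)(b) in turn. Throughout I regard $Y_{\ell m}$ and $\pmb{Y}_{\ell m\pm}$ as degree-zero homogeneous functions on $\real^3\setminus\{0\}$, so that $\pmb{a}_+ Y_{\ell m}=|\vk|\nabla_{\vk}Y_{\ell m}=\nabla_{S^2}Y_{\ell m}$ is the tangential gradient on the sphere and $\pmb{a}_- Y_{\ell m}=\hat{\vk}\times\nabla_{S^2}Y_{\ell m}$. Applying $\vL=-i\,\vk\times\nabla_{\vk}$ to such a function gives the identity $\hat{\vk}\times\nabla_{S^2}g=i\,\vL g$, so equivalently $\pmb{Y}_{\ell m-}=\tfrac{i}{\sqrt{\ell(\ell+1)}}\,\vL Y_{\ell m}$; both $\nabla_{S^2}Y_{\ell m}$ and $\hat{\vk}\times\nabla_{S^2}Y_{\ell m}$ are perpendicular to $\hat{\vk}$, so indeed $\pmb{Y}_{\ell m\pm}\in L^2_{\mrm{tr}}(S^2;\complex^3)$, and $\ell\ge 1$ throughout. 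For (1) the key observation is that $\pmb{a}_\pm$, viewed as maps from scalar functions on $S^2$ to $\complex^3$-valued functions on $S^2$, are $\mrm{SO}(3)$-equivariant intertwiners between the scalar rotation action $g\mapsto g\circ R^{-1}$ and the vector action $\pmb\psi\mapsto R\,\pmb\psi(R^{-1}\cdot)$: for $\pmb{a}_+=\nabla_{S^2}$ this is naturality of the gradient under isometries (chain rule together with $R^t=R^{-1}$), and $\pmb{a}_-=(\hat{\vk}\times)\circ\pmb{a}_+$ adds the equivariant map $\pmb\psi\mapsto\hat{\vk}\times\pmb\psi$ (using $R(\pmb u\times\pmb v)=(R\pmb u)\times(R\pmb v)$ for $R\in\mrm{SO}(3)$). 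Differentiating the intertwining relation at the identity gives $J_i\,\pmb{a}_\pm=\pmb{a}_\pm\,L_i$ for $i=1,2,3$, whence $\vJ^2(\pmb{a}_\pm Y_{\ell m})=\pmb{a}_\pm(\vL^2 Y_{\ell m})=\ell(\ell+1)\pmb{a}_\pm Y_{\ell m}$ and $J_3(\pmb{a}_\pm Y_{\ell m})=m\,\pmb{a}_\pm Y_{\ell m}$, which is (1). (Alternatively, for $\pmb{a}_-=i\vL$ one checks $J_i(\vL g)=\vL(L_i g)$ directly from the commutation relations of $\vL$ and $\vS$.)

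For (2)(a), normalization follows from $\|\pmb{a}_+ Y_{\ell m}\|^2=\int_{S^2}|\nabla_{S^2}Y_{\ell m}|^2\,d\Omega=\langle Y_{\ell m},-\Delta_{S^2}Y_{\ell m}\rangle=\langle Y_{\ell m},\vL^2 Y_{\ell m}\rangle=\ell(\ell+1)$, together with the fact that $\hat{\vk}\times(\cdot)$ is isometric on vectors perpendicular to $\hat{\vk}$, so $\|\pmb{a}_- Y_{\ell m}\|=\|\pmb{a}_+ Y_{\ell m}\|$. For $\lambda=\lambda'$ and $(\ell,m)\ne(\ell',m')$, symmetry of $\vJ^2$ and $J_3$ together with (1) gives $\ell(\ell+1)\langle\pmb{Y}_{\ell m\lambda},\pmb{Y}_{\ell'm'\lambda}\rangle=\ell'(\ell'+1)\langle\pmb{Y}_{\ell m\lambda},\pmb{Y}_{\ell'm'\lambda}\rangle$ and likewise with $m,m'$; since $x\mapsto x(x+1)$ is injective on $\{1,2,\dots\}$, one of the two relations forces the inner product to vanish. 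For the cross terms ($\lambda\ne\lambda'$, arbitrary $(\ell,m),(\ell',m')$), a scalar--triple--product rewriting gives
\beqa
\langle \pmb{a}_+ Y_{\ell m},\pmb{a}_- Y_{\ell'm'}\rangle=\int_{S^2}\overline{\nabla_{S^2}Y_{\ell m}}\cdot\big(\hat{\vk}\times\nabla_{S^2}Y_{\ell'm'}\big)\,d\Omega=-\int_{S^2}\hat{\vk}\cdot\big(\nabla_{S^2}\overline{Y_{\ell m}}\times\nabla_{S^2}Y_{\ell'm'}\big)\,d\Omega,
\eeqa
and the last integrand, viewed as a $2$-form on $S^2$, equals $-\,d\!\big(\overline{Y_{\ell m}}\,dY_{\ell'm'}\big)$, which integrates to zero over the closed surface $S^2$ by Stokes' theorem. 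This completes (2)(a); equivalently, it says that gradient-type and curl-type tangent fields on $S^2$ are $L^2$-orthogonal.

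For (2)(b), suppose $\pmb\phi\in L^2_{\mrm{tr}}(S^2;\complex^3)$ is orthogonal to every $\pmb{Y}_{\ell m\pm}$; the goal is to show $\pmb\phi=0$. By the Hodge--Helmholtz decomposition on $S^2$ — which has no nonzero harmonic $1$-forms — one may write $\pmb\phi=\nabla_{S^2}u+\hat{\vk}\times\nabla_{S^2}v$ with $u,v\in H^1(S^2)$. Pairing with $\pmb{Y}_{\ell m+}=\tfrac{1}{\sqrt{\ell(\ell+1)}}\nabla_{S^2}Y_{\ell m}$, the $v$-term drops by the gradient/curl orthogonality just established, and $\langle\nabla_{S^2}u,\nabla_{S^2}Y_{\ell m}\rangle=\ell(\ell+1)\langle u,Y_{\ell m}\rangle$; hence $\langle u,Y_{\ell m}\rangle=0$ for all $\ell\ge1$, so by completeness of $\{Y_{\ell m}\}_{\ell\ge0}$ in $L^2(S^2)$ the function $u$ is constant and $\nabla_{S^2}u=0$. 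Pairing with $\pmb{Y}_{\ell m-}$ the $u$-term drops, $\hat{\vk}\times(\cdot)$ is isometric on tangent fields, and the same argument gives $v$ constant, so $\hat{\vk}\times\nabla_{S^2}v=0$; therefore $\pmb\phi=0$. (An alternative, representation-theoretic route decomposes $L^2(S^2;\complex^3)=L^2(S^2)\otimes\complex^3$ under $\mrm{SO}(3)$ via Clebsch--Gordan, observes that $L^2_{\mrm{tr}}(S^2;\complex^3)$ — the orthogonal complement of the radial fields $\{h\,\hat{\vk}\}$ — contains exactly two copies of the spin-$\ell$ representation for each $\ell\ge1$ and no spin-$0$ part, and notes that by (1) and (2)(a) the sets $\{\pmb{Y}_{\ell m+}\}_m$ and $\{\pmb{Y}_{\ell m-}\}_m$ span two orthogonal copies of the spin-$\ell$ representation, hence exhaust the $\ell$-isotypic subspace.)

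The computations in (1), and the normalization and same-$\lambda$ orthogonality in (2)(a), are routine once the intertwining property is in hand. The two steps carrying the real content — and the main obstacle — are the Stokes/Hodge-orthogonality identity used for the $+$/$-$ cross terms in (2)(a), and, above all, the input from Hodge--Helmholtz theory on $S^2$ (or the representation-theoretic multiplicity count) needed for completeness in (2)(b).
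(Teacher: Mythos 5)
Your proof is correct, and while its skeleton matches the paper's (the intertwining relation $J_i\,\pmb{a}_\pm=\pmb{a}_\pm L_i$ for part (1), orthogonality of gradient-type and curl-type tangent fields for (2a), and a scalar-potential decomposition of transverse fields for (2b)), the key inputs differ at two points. For (1) you obtain the intertwining relation by $\mrm{SO}(3)$-equivariance of $\pmb{a}_\pm$ and differentiation at the identity, whereas the paper computes the commutators $[L_i,a_{\pm p}]=i\epsilon_{ipq}a_{\pm q}$ directly following Berestetskii--Lifshitz--Pitaevskii; your version is more conceptual, theirs more elementary. In (2a) you derive same-$\lambda$ orthogonality spectrally from self-adjointness of $\pmb{J}^2$ and $J_3$, while the paper integrates by parts against $\Delta_t$; both are routine, and your Stokes identity for the $+/-$ cross terms is exactly the paper's ``Green's identity on $S^2$'' made explicit. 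The genuine divergence is in (2b): the paper invokes Wilcox's theorem on Debye potentials, which decomposes only $C^3$ tangent fields as $\nabla_t F+\hat{\pmb{k}}\times\nabla_t G$, and must then pay for this with a duality argument, summability estimates on the coefficients of $F,G$, and a separate density argument (via stereographic projection) to pass from smooth to $L^2$ fields. You instead invoke the $L^2$ Hodge--Helmholtz decomposition on $S^2$ (using $H^1(S^2)=0$), which applies directly to an arbitrary $L^2$ transverse field and collapses the completeness proof to two lines of integration by parts against the scalar harmonics. What you buy is a much shorter argument with no approximation steps; what you pay is a heavier black box (elliptic/Hodge theory for $1$-forms on $S^2$ rather than a classical potential-theory result for smooth fields). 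Your representation-theoretic multiplicity count is a valid third route. Two small points to tidy up: when you pair $\hat{\pmb{k}}\times\nabla_{S^2}v$ against $\nabla_{S^2}Y_{\ell m}$ with $v$ only in $H^1(S^2)$, the Stokes argument as written needs a density remark (or simply cite the orthogonality of exact and coexact forms, which is part of the Hodge package you already assume); and your spectral orthogonality argument should note that the inner products are taken with the complex conjugate in the first slot, which is harmless since the eigenvalues $\ell(\ell+1)$ and $m$ are real.
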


\begin{proof}
 \begin{enumerate}
  \item[(1)] Following \cite{BLP},   we compute
  on $C^2$ functions from $L^2_{\mathrm{tr}}(S^2; \mathbb{C}^3)$
   \begin{equation}\label{commLa}
  [L_i, a_{\pm p}] = i \epsilon_{ipq} a_{\pm q}, 
  \end{equation}
where $L_i$ and $a_p$ are the  Cartesian components of $\pmb{L},\pmb{a}$, respectively, and $\epsilon_{ip q}$ is the Levi-Civita symbol.
 Since $(S_i \pmb{\psi})_p = -i\epsilon_{ipq}\psi_q$, equation~\eqref{commLa} applied to $Y_{\ell m}$ (which are smooth functions  \cite[Chapter~IV]{Hobs}) can be written as
\begin{equation}
L_i a_{\pm p}Y_{\ell m} - a_{\pm p}L_i Y_{\ell m}= -(S_i \pmb{a}_\pm Y_{\ell m})_p
\end{equation}
which implies 
\begin{equation}\label{Ji}
(J_i \pmb{a}_\pm Y_{\ell m})_p = L_i a_{\pm p}Y_{\ell m} + (S_i \pmb{a}_\pm Y_{\ell m})_p = a_{\pm p} L_i Y_{\ell m}.
\end{equation}
Consequently, for $i= 3$ the above equation yields
\begin{equation}
J_3 (\pmb{a}_\pm Y_{\ell m}) = \pmb{a}_\pm L_3 Y_{\ell m}.
\end{equation}
Instead, by applying $J_i$ twice 
and summing over the index $i=1,2,3$, we find from \eqref{Ji},
\begin{equation}
\pmb{J}^2(\pmb{a}_\pm Y_{\ell m}) = \pmb{a}_\pm \pmb{L}^2 Y_{\ell m}. 
\end{equation}
Since the scalar spherical harmonics are eigenfunctions of the operators $\pmb{L}^2$ and $L_3$ with eigenvalues $\ell(\ell +1)$ and $m$, respectively, we arrive at (1).

\item[(2a)] 
We denote by $\nabla_t = \pmb{a}_+$ the gradient on the sphere. Then the $\pmb{Y}_{\ell m +}$ are orthogonal to the $\pmb{Y}_{\ell' m' -}$, which follows by Green's identity on $S^2$. Similarly for $\pmb{Y}_{\ell m +}$,
\begin{multline}
\frac{1}{\ell(\ell+1)}\int_{S^2} \nabla_t  \ov{Y_{\ell m}(\hat{\pmb{k}})}  \cdot \nabla_t Y_{\ell' m'}( \hat{\pmb{k}} ) \; d\Omega \\
= -\frac{1}{\ell(\ell +1)} \int_{S^2} \Delta_t   \ov{Y_{\ell m}( \hat{\pmb{k}} )} Y_{\ell' m'}( \hat{\pmb{k}} ) \; d\Omega = \delta_{\ell \ell'}\delta_{mm'},
\end{multline}
since $Y_{\ell m}$ are orthonormal eigenfunctions of $\Delta_t$ with eigenvalues $-\ell(\ell +1)$.
As for $\pmb{Y}_{\ell m -}$, we have
\begin{multline}
\int_{S^2} \big( \hat{\pmb{k}} \times \ov{\pmb{Y}_{\ell m +}(\hat{\pmb{k}})}  \big) \cdot \big( \hat{\pmb{k}} \times \pmb{Y}_{\ell' m' +}(\pmb{\hat{k}})  \big) \; d\Omega =\\
\int_{S^2} (\hat{\pmb{k}} \cdot \hat{\pmb{k}})\big(   \ov{\pmb{Y}_{\ell m +}(\hat{\pmb{k}}) }  \cdot \pmb{Y}_{ \ell' m' +}(\hat{\pmb{k}}) \big) - \big( \hat{\pmb{k}}\cdot \pmb{Y}_{\ell' m' +}(\hat{\pmb{k}}) \big) \big( \ov{\pmb{Y}_{\ell m+}(\hat{\pmb{k}})}\cdot \hat{\pmb{k}}\big)\; d\Omega = \delta_{\ell \ell'}\delta_{m m'},
\end{multline}
using
the above result for $\pmb{Y}_{\ell m +}$ and the fact that these are orthogonal to $\hat{\pmb{k}}$.

\item[(2b)] Completeness is a consequence of \cite[Theorem~3.4]{Wi}, which states that given a field of tangent vectors $\pmb{\alpha}$ in the class $C^3$ on the unit sphere $S^2$, there exist unique functions $F$ and $G$ of class $C^2$ on $S^2$ such that
\begin{equation}\label{stzero}
 \int_{S^2} F \, d\Omega = \int_{S^2} G\, d\Omega =0
\end{equation}
and 
\begin{equation}\label{stint}
\quad \pmb{\alpha} = \nabla_t F + \hat{\pmb{k}} \times \nabla_t G.
\end{equation}
We consider an arbitrary smooth field of tangent vectors $\pmb{\alpha}$ on $S^2$. As this is more restrictive than the hypothesis in Wilcox's theorem above, we can apply this theorem to $\pmb{\alpha}$, yielding functions $F$ and $G$ of class $C^2$ on $S^2$ with the properties \eqref{stzero} and \eqref{stint} above.

 Let us {\bc now} decompose {\bc $F$, $G$ into sums which converge in $L^2(S^2)$:}
\beqa
F=\sum_{\ell,m}c_{\ell m}Y_{\ell m}, \quad G=\sum_{\ell,m}d_{\ell m}Y_{\ell m},
\eeqa
{\bc and substitute them to equation (\ref{stint}). In order to exchange the sums with the action of $\nabla_t$, $(\hat \vk\times \nabla_t) $, we 
do the following computation for any  $\pmb{\al}_1$  in the domain of the adjoint maps $\nabla_t^{*}, \, (\hat \vk\times \nabla_t)^*$. 
(For example, we can choose $\pmb{\al}_1$ smooth. As we indicate below, such vector fields form a dense subspace in $L^2_{\tr}(S^2;\complex^3)$)}.
 \begin{align}
\lan \pmb{\al}_1, \pmb{\al}\ran&= \lan \pmb{\al}_1, \nabla_t F + \hat{\pmb{k}} \times \nabla_t G\ran\non\\
&= \lan  (\nabla_t)^* \pmb{\al}_1,  F\ran +   \lan (  \hat{\pmb{k}} \times \nabla_t)^*\pmb{\al}_1, G\ran\non\\
&=\sum_{\ell,m} c_{\ell,m} \lan  (\nabla_t)^* \pmb{\al}_1,  Y_{\ell m}\ran +  \sum_{\ell,m} d_{\ell,m} \lan (  \hat{\pmb{k}} \times \nabla_t)^*\pmb{\al}_1, Y_{\ell m}\ran\non\\
&=\sum_{\ell,m} c_{\ell,m} \lan   \pmb{\al}_1,  \nabla_tY_{\ell m}\ran +  \sum_{\ell,m} d_{\ell,m} \lan \pmb{\al}_1, (  \hat{\pmb{k}} \times \nabla_t)Y_{\ell m}\ran\non\\
&=\sum_{\ell,m} c_{\ell,m}\sqrt{\ell (\ell+1)} \lan   \pmb{\al}_1,  \pmb{Y}_{\ell m+}\ran +  \sum_{\ell,m} d_{\ell,m}\sqrt{\ell(\ell+1)} \lan \pmb{\al}_1,
 \pmb{Y}_{\ell m-}\ran. \label{completeness-computation}
\end{align}
To proceed, we need to show that
\beqa
\sum_{\ell,m} |c_{\ell,m}|^2 \ell (\ell+1)<\infty, \quad  \sum_{\ell,m} |d_{\ell,m}|^2 \ell (\ell+1)<\infty. \label{summability-bound}
\eeqa
To this end, we compute using formula~(\ref{stint})
\begin{align}
\lan \pmb{Y}_{\ell m +}, \pmb{\al}  \ran&=  \fr{1}{\sqrt{\ell(\ell+1)} } \lan \nabla_t Y_{\ell m}, \pmb{\al}\ran= 
\fr{1}{\sqrt{\ell(\ell+1)} } \lan \nabla_t Y_{\ell m}, \nabla_t F\ran\non\\
&=-\fr{1}{\sqrt{\ell(\ell+1)} }\lan \Delta_t Y_{\ell m},  F\ran=-\sqrt{\ell(\ell+1)} \lan Y_{\ell m},  F\ran.
\end{align}
Since  $c_{\ell,m}=\lan Y_{\ell m}, F\ran$  and $\pmb{Y}_{\ell m +}$ form an orthonormal system, the first bound in (\ref{summability-bound}) follows.
The second bound is proven analogously.

Given (\ref{summability-bound}),  computation (\ref{completeness-computation}) gives
\beqa
\lan \pmb{\al}_1, \pmb{\al}- \sum_{\ell m \pm} \ti{c}_{\ell m \pm} \pmb{Y}_{\ell m \pm}\ran=0 
\eeqa
for some square-summable coefficients $ \ti{c}_{\ell m \pm}$. By taking supremum over all $\pmb{\al}_1$ in
the dense domain specified above, subject to $\|\pmb{\al}_1\|\leq 1$, we get
\beqa
\|\pmb{\al}- \sum_{\ell m \pm} \ti{c}_{\ell m \pm} \pmb{Y}_{\ell m \pm}\|=0
\eeqa
which implies that any smooth $\pmb{\al}$ is in the closed subspace spanned by  $\pmb{Y}_{\ell m \pm}$.

Hence it only remains to be shown that the smooth vector fields on the unit sphere are $L^2$-dense in the space of all $L^2$ vector fields.
For this, we consider a generic $L^2$ vector field $\pmb{\beta}$ and we split it into a sum $\pmb{\beta}= \pmb{\beta}_n + \pmb{\beta}_s$, where $\pmb{\beta}_{n,s}$ have support in the north and  south hemisphere, respectively.
Now stereographic projections map each hemisphere to a circle in $\mathbb{R}^2$, and the transformed $\pmb{\beta}_{n,s}$  can be approximated by smooth vector fields on a slightly larger circle. Applying the inverse transformation yields the result. \qedhere

\end{enumerate}

 \end{proof}

\end{appendix}
%%%%%%%%%%%%%%%%%%%%%%%%%%%%%%%%


\begin{thebibliography}{LNT2}    %{99} 

\bibitem[AD17]{AD15} S. Alazzawi and W. Dybalski. \emph{Compton scattering in the Buchholz-Roberts framework of relativistic QED}.
Lett. Math. Phys.  \textbf{107},   (2017)  81--106.

\bibitem[BFS98]{BFS98.2}  V. Bach, J. Fr\"ohlich and I. M. Sigal.
\emph{Quantum electrodynamics of confined nonrelativistic particles.}
Adv. Math. \bf 137\rm, (1998) 299--395.


\bibitem[BCFFS13]{BCFFS13}  V. Bach, T. Chen, J. Fr\"ohlich, and I. M. Sigal. \emph{Effective Dynamics of an electron coupled to an external potential in non-relativistic QED}. Ann. Henri Poincar{\'e} \textbf{6}, (2013) 1573--1597.  



\bibitem[BLP]{BLP} V. B. Berestetskii,  E. M. Lifshitz and L. P. Pitaevskii. \emph{Relativistic Quantum Theory}. Volume 4 of Course of Theoretical Physics, Part 1. 
Pergamon Press, 1971.

%\begin{details}
%\daniela{
%\bibitem{BV} S. P. Boyd, L. Vandenberghe. \emph{Convex Optimization}, Cambridge University Press, 2004.
%}
%\end{details}





\bibitem[BR]{BR} O. Bratteli and  D. W. Robinson. \emph{Operator algebras and quantum statistical mechanics I}. Springer, 1987.

%\bibitem{Bu82} D.\,Buchholz.
%      {\em The physical state space of Quantum Electrodynamics}.
%      Commun.\ Math.\ Phys.\  {\bf 85} (1982) 49--71.

%\bibitem{BF82} D.\,Buchholz, K.\,Fredenhagen.
   %    {\em Locality and the structure of particle states},
   %    Commun.\ Math.\ Phys.\  {\bf 84} (1982) 1--54.

%\bibitem{BJ87} D.\,Buchholz, P.\,Jacobi.
%      {\em  On the Nuclearity Condition for Massless Fields},
%      Lett.\ Math.\ Phys.\  {\bf 13} (1987) 313--323.


 
 


\bibitem[Bu82]{Bu82} D. Buchholz.
	\emph{The physical state space of quantum electrodynamics}.
	Commun. Math. Phys. \bf 85\rm, (1982) 49--71. 

\bibitem[BD84]{BD84} D. Buchholz and S. Doplicher. \emph{Exotic infrared representations of interacting systems}
Annales de l'Institut Henri Poincare. Section A, Physique Theorique.  \textbf{40}, (1984)   175--184.

\bibitem[Bu86]{Bu86} D. Buchholz. \emph{Gauss' law and the infraparticle problem}.
Phys. Lett. B  \bf 174\rm, (1986) 331--334.

\bibitem[BR14]{BR14} D. Buchholz and J.E. Roberts. \emph{New light on infrared problems: sectors, statistics, symmetries and spectrum.}
	Commun. Math. Phys. \bf 330\rm, (2014) 935--972.

\bibitem[Ch08]{Ch00} T. Chen. \emph{Infrared renormalization in non-relativistic QED and scaling criticality.} J. Funct. Anal. \textbf{254},  (2008) 2555--2647.

%\emph{Operator-theoretic infrared renormalization and construction of dressed 1-particle
%states in non-relativistic QED}. 
%Preprint arXiv:math-ph/0108021.

\bibitem[CF07]{CF07} T.\, Chen and J.\, Fr\"ohlich. 
{\em Coherent infrared representations in non-relativistic QED}.
Spectral Theory and Mathematical Physics: A Festschrift in Honor of Barry Simon's 60th Birthday, Vol. I, Proc. Symp. Pure Math., AMS (2007).


%\bibitem{CFP09} T. Chen, J. Fr\"ohlich and A. Pizzo. \emph{Infraparticle scattering states in non-relativistic QED: II. Mass shell properties.} 
%J. Math. Phys.  \bf 50\rm, (2009) 012103. 

\bibitem[CFP07]{CFP07} T. Chen, J. Fr\"ohlich and A. Pizzo. \emph{Infraparticle scattering states in non-relativistic QED: 
I. The Bloch-Nordsieck paradigm.} Commun. Math. Phys. \bf 294\rm, (2010) 761--825.


\bibitem[CFP09]{CFP09} T.\,Chen, J.\,Fr\"ohlich and A.\,Pizzo.
      {\em Infraparticle scattering states in nonrelativistic quantum electrodynamics. II. Mass shell properties}.
      Journal of Mathematical Physics {\bf 50}, (2009) 012103. 



%\bibitem{DHR71} S.\,Doplicher, R.\,Haag, J.E.\,Roberts.
 %      {\em Local Observables and Particle Statistics I},
 %      Commun.\ Math.\ Phys.\  {\bf 23} (1971) 199--230.

%\bibitem{KPR} K.\,Kraus, L.\,Polley, G.\,Reents.
%      {\em Models for Infrared Dynamics: I. Classical Currents},
%      Ann.\ Inst.\ H.\ Poincar\'{e} {\bf A\,26} (1977) 109--162.







\bibitem[DG04]{DG04} J. Derezi\'nski and  C. G\'erard.  \emph{Scattering theory of infrared divergent Pauli-Fierz
Hamiltonians}. Ann. Henri Poincar\'e \bf 5\rm, (2004) 523--577.

\bibitem[DG]{DG} J. Derezi\'nski and  C. G\'erard. \emph{Mathematics of quantization and quantum fields}. Cambridge University Press, 2013.


\bibitem[DT12]{DT11} W. Dybalski and Y. Tanimoto. \emph{Infraparticles with superselected direction of motion in two-dimensional conformal field theory}. Commun. Math. Phys. \textbf{311}, (2012) 457--490.

\bibitem[DP13]{DP13.1} W. Dybalski and A. Pizzo. \emph{Coulomb scattering in the massless Nelson model~II. Regularity of ground states}.  
Rev. Math. Phys. \textbf{31},  (2019) 1950010.

%Rev. Math. Phys. (Online Ready)  https://doi.org/10.1142/S0129055X19500107.
%\verb+arXiv:1302.5012v2+.

%\bibitem{DP17.2} W. Dybalski and A. Pizzo.  \emph{Coulomb scattering in the massless Nelson model III. Ground state wave functions and non-commutative %recurrence relations}. To appear in Ann. Henri Poincar\'e. Preprint arXiv:1704.02924. 

\bibitem[DP18]{DP17.2} W. Dybalski and A. Pizzo.  \emph{Coulomb scattering in the massless Nelson model~III. Ground state wave functions and non-commutative recurrence relations}. Ann. Henri Poincar\'e \textbf{19}, (2018) 463--514.   




\bibitem[Fr73]{Fr73} J. Fr\"ohlich. \emph{On the infrared problem in a model of scalar electrons and massless, scalar bosons.} 
Ann. Inst. H. Poincar\'e Sect. A (N.S.) \bf 19\rm, (1973) 1--103. 

\bibitem[Fr74]{Fr74.1} J. Fr\"ohlich. \emph{Existence of dressed one electron states in a class of persistent models}. 
Fortschr. Phys.   \bf 22\rm, (1974) 159--198. 


\bibitem[FP10]{FP10} J. Fr\"ohlich and A. Pizzo. \emph{Renormalized electron mass in non-relativistic QED}. 
Commun. Math. Phys. \bf 294\rm,  (2010) 439--470.


\bibitem[Ha]{Ha} R. Haag. \emph{Local Quantum Physics}. Second edition. Springer-Verlag, Berlin, 1996.

\bibitem[HH08]{HH08} D. Hasler and I. Herbst. 
\emph{Absence of ground states for a class of translation invariant models of non-relativistic QED.}
Commun. Math. Phys. \bf 279\rm, (2008) 769--787.

\bibitem[HPS16]{HPS16} S. W. Hawking, M. J. Perry and A. Strominger. \emph{Soft hair on black holes}.  Phys. Rev. Lett. \textbf{116}, (2016) 231301. 

\bibitem[Hi00]{Hi00} F. Hiroshima. \emph{Essential self-adjointness of translation-invariant quantum field models for arbitrary coupling constant}. Commun. Math. Phys.
\textbf{211}, (2000) 585--613.


\bibitem[Ho65]{Hobs} E. W. Hobson. \emph{The theory of spherical and ellipsoidal harmonics.} Chelsea Publishing Company, 1965.


\bibitem[KR]{KR} R.V. Kadison and J.R. Ringrose. \emph{Fundamentals of the theory of operator algebras: Advanced theory.} Academic Press, 1986.

\bibitem[KM14]{KM14} M.\,K\"onenberg, O.\,Matte.
      {\em The mass shell in the semi-relativistic Pauli-Fierz model}.
      Ann.\ Henri\ Poincar\'{e} {\bf 15}, (2014) 863--915.


\bibitem[KPR77]{KPR77} K. Kraus, L. Polley and G. Reents. \emph{Models for infrared dynamics. I. Classical currents.} Ann. Inst. H. Poincar\'e  \textbf{26}, (1977) 109--162.

\bibitem[Kr82]{Kr82} K. Kraus.
 \emph{Aspects of the infrared problem in quantum electrodynamics.} Found. Phys. \textbf{13}, (1983) 701--713.


\bibitem[Ku]{Ku} W. Kunhardt. \emph{On infravacua and the superselection structure of theories with massless particles}
	PhD Thesis,\ 2001.
        Available\ as\ arXiv:math-ph/0109001.

\bibitem[Ku98]{Ku98}  W. Kunhardt. \emph{On infravacua and the localization of sectors}.    J. Math. Phys. \textbf{39},  (1998) 6353.

\bibitem[Ku98.1]{Ku98.1} W. Kunhardt. \emph{On infravacua and superselection theory}.  J. Math. Phys. \textbf{39}, (1998) 3589.

\bibitem[LL]{LL} L. D. Landau and E. M. Lifshitz. \emph{Quantum Mechanics}. Volume 3 of Course of Theoretical Physics. Pergamon Press, 1965.

\bibitem[LL04]{LL04} E. H. Lieb and M. Loss. \emph{A note on polarization vectors in Quantum Electrodynamics}. Commun. Math. Phys. \bf 252\rm,  (2004) 477--483.

\bibitem[LMS07]{LMS07} M. Loss, T. Miyao and H. Spohn. \emph{Lowest energy states in nonrelativistic QED: Atoms and ions in motion.}  J. Funct. Anal. \bf 243\rm, (2007) 353--393. 


\bibitem[Pe89]{Petz} D. Petz. \emph{Leuven notes in Mathematical and Theoretical Physics. Volume 2: An invitation to the algebra of canonical commutation relations.} Leuven University Press, 1989. 


\bibitem[Pi03]{Pi03} A. Pizzo. \emph{One-particle (improper) states in Nelson's massless model.} 
Ann. Henri Poincar\'e  \bf 4\rm, (2003) 439--486.

\bibitem[Pi05]{Pi05} A. Pizzo. \emph{Scattering of an infraparticle: the one particle sector
in Nelson's massless models}. Ann. Henri Poincar\'e  \bf 6\rm, (2005) 553--606.

\bibitem[Re74]{Re74} G. Reents. \emph{Scattering of photons by an external current.} J. Math. Phys. \textbf{15}, (1974) 31--34. 

\bibitem[Ro70]{Ro70} G. Roepstorff. \emph{Coherent photon states and spectral condition}. Commun. Math. Phys. \textbf{19}, (1970) 301--314.

\bibitem[Ru78]{Ru78} S.N.M. Ruijsenaars. \emph{On Bogolubov transformations. II. The general case.} Annals of Physics \textbf{116},  (1978) 105--134.

\bibitem[Sch16]{Sch16} W. Schmidt. \emph{Lie groups and Lie algebras}. Lecture notes by Tony Feng,  2012. (http://web.stanford.edu/~tonyfeng/222.pdf) 


\bibitem[Sp]{Sp} H. Spohn. \emph{Dynamics of charged particles and their radiation field}. Cambridge University Press, 2004.

\bibitem[We]{We} S. Weinberg. \emph{The quantum theory of fields. Volume 1: Foundations.} Cambridge University Press, 2005.

\bibitem[Wi57]{Wi} C. H. Wilcox. \emph{Debye potentials}. Journal of Mathematics and Mechanics \textbf{6},   (1957)  167--201.

\end{thebibliography}
\end{document}